\renewcommand{\bf}[1]{\mathbf{#1}}
\newtheorem{theorem}{Theorem}
\newtheorem{corollary}{Corollary}[theorem]
\newtheorem{proposition}{Proposition}[theorem]
\newtheorem{lemma}{Lemma}
\newtheorem{definition}{Definition}
\DeclareMathOperator{\poly}{poly}
\newcommand{\fracp}[2]{\left( \frac{#1}{#2} \right)} 
\DeclareMathOperator{\Cov}{Cov}
\DeclareMathOperator{\Var}{Var}
\def\EE{\mathbb{E}}
\def\calU{\mathcal{U}}
\def\calF{\mathcal{F}}
\DeclareMathOperator{\Haar}{Haar}
\def\RR{\mathbb{R}}
\begin{document}

\title{ Generalized Cross-Entropy Benchmarking for Random Circuits with Ergodicity}

\author{Bin Cheng}
\thanks{These authors contributed equally}
\affiliation{Shenzhen Institute for Quantum Science and Engineering, and Department of Physics, Southern University of Science and Technology, Shenzhen 518055, China}
\affiliation{Centre for Quantum Technologies, National University of Singapore, Singapore}
\affiliation{Centre for Quantum Software and Information, Faculty of Engineering and Information Technology, University of Technology Sydney, NSW 2007, Australia}

\author{Fei Meng}
\thanks{These authors contributed equally}
\affiliation{Shenzhen Institute for Quantum Science and Engineering, and Department of Physics, Southern University of Science and Technology, Shenzhen 518055, China}
\affiliation{Department of Physics, City University of Hong Kong, Tat Chee Avenue, Kowloon, Hong Kong, China.}
\affiliation{QICI Quantum Information and Computation Initiative, Department of Computer Science, The University of Hong Kong, Pokfulam Road, Hong Kong SAR, China}

\author{Zhi-Jiong Zhang}
\affiliation{Shenzhen Institute for Quantum Science and Engineering, and Department of Physics, Southern University of Science and Technology, Shenzhen 518055, China}

\author{Man-Hong Yung}
\email{yung@sustech.edu.cn}
\affiliation{Shenzhen Institute for Quantum Science and Engineering, and Department of Physics, Southern University of Science and Technology, Shenzhen 518055, China}
\affiliation{International Quantum Academy, Shenzhen, 518048, China}
\affiliation{Guangdong Provincial Key Laboratory of Quantum Science and Engineering, Southern University of Science and Technology, Shenzhen, 518055, China}
\affiliation{Shenzhen Key Laboratory of Quantum Science and Engineering, Southern University of Science and Technology, Shenzhen 518055, China}







\begin{abstract}
Cross-entropy benchmarking is a central technique used to certify a quantum chip in recent experiments. To better understand its mathematical foundation and develop new benchmarking schemes, we introduce the concept of ergodicity to random circuit sampling and find that the Haar random quantum circuit satisfies an ergodicity condition---the average of certain types of post-processing function over the output bit strings is close to the average over the unitary ensemble. For noiseless random circuits, we prove that the ergodicity holds for polynomials of degree $t$ with positive coefficients and when the random circuits form a unitary $2t$-design. For strong enough noise, the ergodicity condition is violated. This suggests that ergodicity is a property that can be exploited to certify a quantum chip. 
We formulate the deviation of ergodicity as a measure for quantum chip benchmarking and show that it can be used to estimate the circuit fidelity for global depolarizing noise and weakly correlated noise. For a quadratic post-processing function, our framework recovers Google's result on estimating the circuit fidelity via linear cross-entropy benchmarking (XEB), and we give a sufficient condition on the noise model characterizing when such estimation is valid. Our results establish an interesting connection between ergodicity and noise in random circuits and provide new insights into designing quantum benchmarking schemes.

\textbf{Keywords:} Quantum benchmarking, Quantum certification, Cross entropy, Random circuit sampling
\end{abstract}

\maketitle



\section{Introduction} 

Recent advances in quantum technology offer significant promise but simultaneously impose challenging precision requirements on quantum device components. Certification of quantum devices~\cite{eisert2020quantum,kliesch2021theory,resch2021benchmarking,wang2022topologically} to ensure accurate operations and correct outputs is crucial to meaningful applications and to demonstrate quantum advantages, especially when current quantum devices are small-to-medium-sized and noisy, operating within the noisy intermediate-scale quantum (NISQ) regime~\cite{Preskill2018}.
For example, quantum computational supremacy---the superiority of near-term quantum computation without error correction compared to its classical counterpart~\cite{Preskill2012, harrow_quantum_2017, yung_quantum_2019} can only be established when the noise of the quantum device is sufficiently small. 
Protocols for demonstrating quantum computational supremacy includes boson sampling~\cite{Aaronson2010} and its variants~\cite{Lund2014,Hamilton2017,brod2019photonic,gao2022quantum}, instantaneous quantum polynomial-time (IQP) sampling~\cite{IQP10, IQP15}, and random circuit sampling (RCS)~\cite{boixo_characterizing_2018}. 
Random circuit sampling has emerged as a particularly promising approach, given its recent experimental implementations with over 50 qubits~\cite{arute_quantum_2019, wu_strong_2021, zhu_quantum_2021,decross_computational_2024} and the complexity-theoretic analysis of its classical hardness~\cite{Aaronson2017, mann_complexity_2017, hangleiter_anticoncentration_2018, harrow_approximate_2018, Bouland2019, movassagh_cayley_2019, arute_quantum_2019, dalzell_random_2020, morimae_sampling_2019}.

\begin{figure}[t]
    \centering
    \includegraphics[width = 90mm]{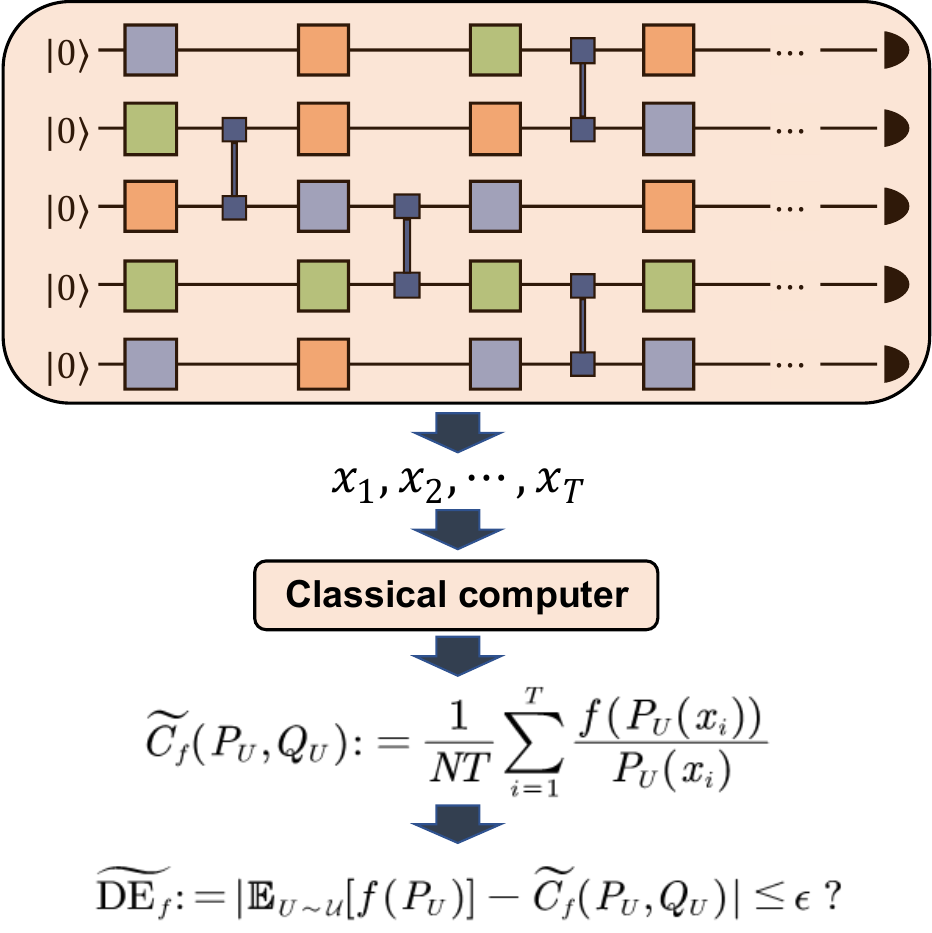}
    \caption{Basic procedure of our quantum chip benchmarking scheme. 
    Here, a random quantum circuit from a random circuit ensemble is sampled.
    Then, this circuit is repeatedly run and measured in the computational basis.
    The measurement outcomes are processed by a classical computer to get an estimate of the correlation between the ideal output distribution and the experimental output distribution.}
    \label{fig:benchmark_procedure}
\end{figure}

Demonstrating quantum computational supremacy relies on corresponding benchmarking schemes that measure the performance of the quantum device to ensure its correct function.
But benchmarking quantum devices in the supremacy regime is notoriously challenging~\cite{rinott_statistical_2020}.
For example,  state fidelity~\cite{Jozsa1994} measures the overlap between the observed (with noise) quantum state to the ideal (without noise) quantum state. 
Randomized benchmarking~\cite{Knill2008,wallman2014randomized,magesan2011scalable,helsen2022general} can be employed to evaluate the fidelity of individual quantum gates.
Process tomography~\cite{Poyatos1997,torlai2023quantum,white2022non,mohseni2008quantum} reveals the full details of the quantum device but is simultaneously the most resource-intense. 
However, all these methods become impractical for assessing the experimental performance of quantum chips at a large scale, due to the exponential growth in dimensionality~\cite{eisert2020quantum}. 
Moreover, in a typical random-circuit sampling experiment with 50 or more qubits, the majority of output bit strings occur only once even when millions of bit strings are sampled in a single experiment, rendering the reliable estimation of probabilities or calculation of expectation values for observables infeasible.

Benchmarking schemes with less resource cost in the supremacy regime can be achieved with less information gained or by imposing assumptions on the device ~\cite{eisert2020quantum}. For example,  fidelity gives much less information about the device but saves tremendous resources regarding measurements and sample complexity than full tomography~\cite{flammia2011direct,reich2013optimal}. Fidelity witness~\cite{gluza2018fidelity,aolita2014reliable} tests if the state fidelity passes a certain threshold, which is weaker than fidelity but further reduces the resource cost. A similar strategy to certify the quantum chip is to verify the presence of certain key properties, such as entanglement and entanglement propagation~\cite{jurcevic2014quasiparticle,audenaert2006correlations,courme2023manipulation}.  
Imposing assumptions can also reduce the resource cost for quantum certification. 
For example, a quantum simulator can be verified by validating its correct function in regimes where the system's behavior can be efficiently simulated by classical computers, with the assumption that such a performance of the quantum simulator extends to the classically intractable regime~\cite{trotzky2012probing,trotzky2010suppression}.
Similarly, the standard approach to certify a quantum device via testing each qubit and extending the trust to the whole chip assumes no global obstruction to the chip's performance. The cross-entropy benchmarking (XEB)~\cite{boixo_characterizing_2018,arute_quantum_2019} can be used to estimate circuit fidelity when the noise is assumed to be small and weakly correlated. 
While loopholes and spoofs are almost inevitable when the certification is based on assumptions~\cite{oh2023spoofing,gao2024limitations,suzuki2021experimental}, these protocols offer affordable solutions for benchmarking and certifying quantum devices, especially when the corresponding assumptions can be justified for specific experimental platforms and hardware.

We contribute to this line of research by proposing a new benchmarking scheme based on the ergodicity of random quantum circuits, a concept we borrow from statistical physics. Ergodicity is an important analytical tool of equilibrium statistical mechanics, which formulates the condition when the time average of an observable is the same (or sufficiently close) to the ensemble average~\cite{peters2019ergodicity,birkhoff1931proof,neumann1932proof,moore2015ergodic}. When the ergodicity condition holds, dynamical descriptions of a physical system can often be replaced with probabilistic distributions, which is much simpler since time dependence is eliminated and is the core technique used in equilibrium statistical mechanics~\cite{peters2019ergodicity,gibbs1902elementary}. Recently, there have been several works studying the behavior of noisy random quantum circuits by mapping the random quantum circuits into models in statistical mechanics~\cite{liu_benchmarking_2021, dalzell_random_2021,deshpande_tight_2021,  gao_limitations_2024}. Along with this research direction, we extend the concept of ergodicity to random circuit sampling, with the time average replaced by the average over output bitstrings.

Specifically, we formulate the ergodicity condition---the average of a function $f$ over a random circuit ensemble is close to the average over output bit strings of a single circuit instance from this ensemble.
Then, we prove that unitary $2t$-design~\cite{dankert_exact_2009, gross_evenly_2007} satisfies the ergodicity condition relative to polynomials with degree at most $t$ and positive coefficients. 
We also prove the ergodicity holds for the function $f(p)=p \ln p$, and $f(p) = \sum_i a_i p^{\theta_i}+b$ with positive real numbers $a_i$ and $\theta_i$ and real number $b$, when the unitaries are Haar random.
Based on this, we design a benchmarking scheme that quantifies the performance of quantum chips through the deviation of ergodicity.  Our benchmarking scheme only requires $T$ independent samples from the experiment, where $T=O(\poly(\log N))$ is significantly smaller than the Hilbert space dimension $N$, which makes it a practical tool for benchmarking in the supremacy regime.
We also analyze the relation between the deviation of ergodicity and fidelity under certain noise model assumptions. In particular, our framework recovers Google's result on estimating the circuit fidelity via linear cross-entropy benchmarking (XEB), and we give rigorous criteria on the noise model characterizing when such estimation is valid, and thus contributes to the research of technical aspects of XEB~\cite{boixo_characterizing_2018,arute_quantum_2019,aaronson_classical_2019, liu_benchmarking_2021, dalzell_random_2021,deshpande_tight_2021,  gao_limitations_2024}. 

Our results formulate a novel property of random circuit sampling, establish an interesting connection between ergodicity and noise in random circuits, and provide new insights into designing quantum benchmarking schemes.

\section{Ergodicity of Random Circuit Ensemble}
\label{sec:ergodic}

Before defining ergodicity, we briefly introduce the scheme of random circuit sampling  (RCS).
RCS is a task of sampling bit-strings from the output probability distribution of a (pseudo-)random quantum circuit~$U$ acting on the initial state $\ket{0^n}$, as illustrated in \cref{fig:benchmark_procedure}.
Here, $U \in \mathcal{U}$ is uniformly randomly sampled from some (possibly discrete) circuit ensemble $\mathcal{U}$, where the randomness is over the choices of local quantum gates in the quantum circuit.
RCS requires the quantum circuit to output a set of bit strings by performing computational basis measurements.
For each bit string $x \in \{ 0,1 \}^n$ of $n$ qubits, we will consider the ideal output distribution~$\{P_U(x)\}$ and the experimental output distribution~$\{Q_U(x)\}$, defined respectively as
\begin{align}
    P_U(x) &:= \left|\mel{x}{U}{0^n}\right|^2 \ , \\
    Q_U(x) &:= \mel{x}{\mathcal{E}_U\left(\op{0^n}{0^n}\right)}{x} \ .
\end{align}
Here, the quantum channel $\mathcal{E}_U$ models the experimental process for implementing circuit $U$. The output probability $P_U(x)$ can be regarded as a random variable when the unitary $U$ is randomly sampled from an ensemble $\calU$. If $U$ is sampled from Haar measure, then for a fixed bit string $x_0$, the ideal output probability $p = P_U(x_0)$ should approximately obey the Porter-Thomas distribution $\Pr(p) = N e^{-Np}$ (see \cref{sec:haar_output_prob} for an alternative proof), where the sample space is the group of $N \times N$ unitary operators $\mathbb{U}(N)$~\cite{boixo_characterizing_2018}. 
More practically, we will take the ensemble $\calU$ as a unitary $t$-design~\cite{dankert_exact_2009, gross_evenly_2007}, which is a finite set of unitaries replicating the statistical behavior of the Haar measure up to the $t$-th moment.
\begin{definition}[Unitary $t$-design]\label{def:unitary-t-design}
    A discrete circuit family $\calU$ of $N \times N$ unitary operators is called a unitary $t$-design if it satisfies,
    \begin{align}
        \frac{1}{|\calU|} \sum_{U \in \calU} U^{\otimes t} \otimes (U^{\dagger})^{\otimes t} = \int_{\mathbb{U}(N)} U^{\otimes t} \otimes (U^{\dagger})^{\otimes t} \dd{U} \ ,
    \end{align}
    where the integral is taken over the Haar measure on the unitary group $\mathbb{U}(N)$.
\end{definition}

The ergodicity of random quantum circuits is defined for a circuit ensemble $\calU$ and a smooth function $f(p)$ with $p = P_U(x)$, which states that the average of $f(P_U(x))$ over the outputs $x$ is close to the average over unitaries $U$. 
The mathematical definition of the ergodicity of random circuit ensembles is as follows.
\begin{definition}[Ergodicity]\label{def:ergodicity}
Let $U_0$ be an $n$-qubit quantum circuit sampled from an ensemble $\calU$ and $f: [0, 1] \to \RR$ be some smooth function. Then, the ensemble $\calU$ is said to satisfy ergodicity relative to the function $f$ if for an arbitrary bit string $x_0 \in \{ 0,1 \}^n$, we have
\begin{align}\label{eq:ergodic}
\left| \mathbb{E}_U \left[ f(P_U(x_0)) \right] - \frac{1}{N}\sum_{x \in \{0,1\}^n} f(P_{U_0}(x)) \right| = O\fracp{\sigma_f}{\sqrt{N}} \ ,
\end{align}
with high probability, where $\sigma_f := \sqrt{\mathbb{E}_U\left[(f-\mathbb{E}_U(f))^2\right]}$ is the standard deviation of the random variable $f(P_U(x_0))$ over $\calU$, $N = 2^n$ and
\begin{align}\label{eq:ensemble_average_def}
    \mathbb{E}_U [f(P_U(x_0)] := \frac{1}{|\mathcal{U}|} \sum_{U \in \calU} f(P_U(x_0)) \ ,
\end{align}
is the ensemble average over $U$.
\end{definition}

As a simple example, let $\calU$ be the 1-qubit Pauli operators and let $f(p) = p$.
Then, the second term in \cref{eq:ergodic} is $1/2$.
As for the first term, direct calculation gives $\EE_U [P_U(x)] = \frac{1}{4}\sum_{P \in \{I, X, Y, Z \}} \mel{x}{P^{\dagger}}{0} \mel{0}{P}{x} = 1/2$.
Thus, the set of 1-qubit Pauli operators satisfies ergodicity relative to the linear function $f(p) = p$.
In fact, it is not so hard to generalize the calculation to $n$-qubit Pauli operators and show that it also satisfies the ergodicty on the linear function.
Moreover, as indicated by \cref{thm:ergodicity_polynomials}, if one chooses higher-order functions, then the circuit ensemble should also be more close to Haar measure.

\cref{def:ergodicity} takes a similar form as the ergodic theorem~\cite{moore2015ergodic}, which states that the time average of a function $f$ of a dynamical system is equal to the ensemble average, except that the average over time is replaced by the average over output bit strings $x$. 
The above formulation of ergodicity is also a statement of concentration of measure~\cite{muller2011concentration,mcclean2018barren,ledoux_concentration_2005}, but it is different from Levy's lemma~\cite{milman1986asymptotic,ledoux2001concentration}, another statement of concentration of measure commonly used in quantum information science.
However, it does not appear that one can derive the ergodicity condition from Levy's lemma.
A detailed discussion of its connection to Levy's lemma is presented in \cref{sec:levy_lemma_discussion}.

Our main technical contribution is the following theorem, which characterizes a family of circuit ensemble $\calU$ and functions $f$, for which the ergodicity holds. From now on, we omit the subscript of $U_0$ when the context is clear.
\begin{theorem}[Ergodicity relative to polynomials] \label{thm:ergodicity_polynomials}
    A unitary $2t$-design $\calU$ satisfies ergodicity relative to the function $f(p) = \sum_{i=1}^t a_i p^i + b$ with $a_i$ non-negative and $b$ a constant number. More explicitly, for such a function $f(p)$, we have
\begin{equation}\label{eq:ergodicity_with_confidence_interval}
    \Pr(\left| \mathbb{E}_U \left[ f(P_U(x_0)) \right] - \frac{1}{N}\sum_{x \in \{0,1\}^n} f(P_{U}(x)) \right|\geq \alpha \frac{\sigma_f}{\sqrt{N}})\leq \frac{1}{\alpha^2} \, ,
\end{equation}
holds for any positive constant $\alpha$ and $U \in \calU$. 
Moreover, when the unitary ensemble is the Haar measure, the ergodicity condition (\cref{eq:ergodicity_with_confidence_interval}) holds for $f(p) = \sum_i a_i p^{\theta_i} + b$, where $\theta_i$ and $a_i$ are positive real numbers.
\end{theorem}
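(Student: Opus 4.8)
The plan is to read the ergodicity inequality \cref{eq:ergodicity_with_confidence_interval} as a Chebyshev bound on the empirical average
$Y_U := \frac{1}{N}\sum_{x} f(P_U(x))$, viewed as a random variable over the choice of $U \in \calU$. The first observation I would make is that, by the permutation symmetry of a unitary design (the marginal law of $P_U(x)$ is the same for every bitstring $x$), one has $\EE_U[f(P_U(x_0))] = \EE_U[Y_U]$. Hence the quantity inside the absolute value in \cref{eq:ergodicity_with_confidence_interval} is exactly $Y_U - \EE_U[Y_U]$, and the whole statement reduces to a concentration estimate for $Y_U$ about its mean. By Chebyshev's inequality it then suffices to prove the variance bound $\Var(Y_U) \le \sigma_f^2/N$, since this gives the inclusion of events $\{|Y_U - \EE_U Y_U| \ge \alpha \sigma_f/\sqrt N\} \subseteq \{|Y_U - \EE_U Y_U| \ge \alpha\sqrt{\Var(Y_U)}\}$, whose right-hand probability is at most $1/\alpha^2$.

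Next I would expand $\Var(Y_U) = \frac{1}{N^2}\sum_{x,y}\Cov(f(P_U(x)), f(P_U(y)))$ and split it into diagonal and off-diagonal parts. The $N$ diagonal terms are all equal to $\sigma_f^2$ by symmetry, contributing exactly $\sigma_f^2/N$, which is precisely the target bound. Therefore the theorem follows once I show the off-diagonal sum is non-positive, i.e.\ $\Cov(f(P_U(x)), f(P_U(y))) \le 0$ for every $x \ne y$. Expanding $f$ and using that a constant has zero covariance gives $\Cov(f(P_U(x)), f(P_U(y))) = \sum_{i,j=1}^t a_i a_j\,\Cov(P_U(x)^i, P_U(y)^j)$, so since $a_i \ge 0$ it is enough to establish the single sign condition $\Cov(P_U(x)^i, P_U(y)^j) \le 0$ for each $i,j$.

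The core computation is thus the joint moments $\EE_U[P_U(x)^i P_U(y)^j]$. Since $P_U(x)^i P_U(y)^j$ is a balanced monomial of degree $i+j \le 2t$ in the entries of $U$ and $\bar U$, the $2t$-design property lets me replace the ensemble average by the Haar average — this is exactly where the "$2t$" is needed. The Haar values follow from the fact that $(P_U(x))_x$ is distributed as the squared moduli of a uniform complex unit vector (a flat Dirichlet distribution), yielding $\EE_U[P_U(x)^i P_U(y)^j] = (N-1)!\,i!\,j!/(N-1+i+j)!$ and $\EE_U[P_U(x)^i] = (N-1)!\,i!/(N-1+i)!$. Substituting, the sign of the covariance is governed by comparing $(N-1+i+j)!$ against $(N-1+i)!(N-1+j)!/(N-1)!$, and the inequality $(m+i)!(m+j)! \le m!\,(m+i+j)!$ with $m=N-1$ holds factor-by-factor because each term $(m+j+k)$ dominates $(m+k)$. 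This gives $\Cov(P_U(x)^i, P_U(y)^j)\le 0$ and closes the design case.

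For the Haar statement with real exponents $\theta_i > 0$, the same scheme applies verbatim with the factorial moments replaced by their Gamma-function analogues, $\EE_U[P_U(x)^{\theta}] = \Gamma(N)\Gamma(1+\theta)/\Gamma(N+\theta)$ and $\EE_U[P_U(x)^{\theta_i}P_U(y)^{\theta_j}] = \Gamma(N)\Gamma(1+\theta_i)\Gamma(1+\theta_j)/\Gamma(N+\theta_i+\theta_j)$. Non-positivity of the covariance then reduces to $\Gamma(N+\theta_i)\Gamma(N+\theta_j) \le \Gamma(N)\Gamma(N+\theta_i+\theta_j)$, which is the log-convexity of $\Gamma$ (equivalently, majorization of $\{N+\theta_i, N+\theta_j\}$ by $\{N, N+\theta_i+\theta_j\}$ via Karamata's inequality). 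The main obstacle I anticipate is not any single algebraic step but the bookkeeping that makes the moment identities rigorous at the level of a design rather than the full Haar measure: one must verify that only balanced moments of total degree at most $2t$ appear, and that the Dirichlet/Gamma formulas are indeed the correct Haar values. Once that accounting is settled, the remaining inequalities are elementary.
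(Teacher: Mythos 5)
Your proposal is correct and follows essentially the same route as the paper: Chebyshev's inequality applied to the empirical average, the variance split into a diagonal part equal to $\sigma_f^2/N$ and an off-diagonal part shown to be non-positive via the explicit Haar joint moments $\EE_U[P_U(x)^{q_1}P_U(y)^{q_2}] = \Gamma(N)\Gamma(q_1+1)\Gamma(q_2+1)/\Gamma(q_1+q_2+N)$, with the $2t$-design property invoked exactly where you place it. Your Dirichlet/log-convexity phrasing of the moment formulas and the sign comparison is just a repackaging of the paper's Lemma~\ref{lemma:covariance_over_U} and its factor-by-factor product inequality.
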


We give some remarks before presenting the proof.
Such a polynomial $f$ with a non-negative coefficient is not the only type of functions for which the ergodicity holds. We prove that $f(p)=p \ln p$ also satisfies ergodicity for Haar random unitaries in \cref{thm:ergodicity_log}, which is also numerically illustrated in \cref{fig:numeric_RUC}.  
More investigations on characterizing which types of functions satisfy the ergodicity are left for future study. 
Characterizing the full family of functions for which the ergodicity condition holds is an interesting yet challenging mathematical question, which can help us understand the universality of the ergodicity condition. However, for practical use in quantum benchmarking, the established ergodicity for the polynomial function $f(p) = \sum_i^t a_i p^i + b$ and the logarithmic function $f(p) = p \ln p$ already covers the most important cases of cross-entropy benchmarking. 
Specifically, we will show in \cref{sec:general_framework} that the former can be used to reproduce the linear cross-entropy, while the latter can be used to reproduce the logarithmic cross-entropy---both employed as figures of merit in Google's experiment~\cite{arute_quantum_2019}. Nevertheless, finding additional types of functions for which the ergodicity condition holds could potentially broaden the toolkit for quantum benchmarking.

Second, $\alpha$ is a parameter that appears in both the distance and the probability, which specifies the difficulty to satisfy the ergodicity. For a given unitary $U$, we say that the ergodicity is satisfied with respect to $\alpha$ if $\left| \mathbb{E}_U \left[ f(P_U(x_0)) \right] - \frac{1}{N}\sum_{x \in \{0,1\}^n} f(P_{U}(x)) \right| \leq \alpha \frac{\sigma_f}{\sqrt{N}}$ holds.
Larger $\alpha$ means the deviation distance allowed is larger, while at the same time the corresponding probability for having such a large deviation is small. Thus picking a a larger $\alpha$ means to have higher probability in satisfying the ergodicity with respect to $\alpha$.
One should choose a suitable $\alpha$ so that the two averages are close to each other with high probability.
For example, in order to let the deviation of ergodicity holds with probability at least $0.99$, one should pick $\alpha = 10$ and it means that the two averages in the formulation of ergodicity (\cref{eq:ergodic}) will be allowed to be within distance $10\sigma_f/\sqrt{N}$. 
When a deviation larger than $10\sigma_f/\sqrt{N}$ is observed from the experiment, the ergodicity condition with respect to $\alpha=10$ is violated. Therefore, our theorem guarantees that, for a perfect quantum device, the probability for violating the ergodicity with respect to $\alpha$ for the function $f(p)$ is vanishingly small for large $\alpha$.

\begin{proof}[Proof of \cref{thm:ergodicity_polynomials}]
The proof is based on Chebyshev's inequality, which states that given a random variable $X$ with expectation $\mu$ and variance $\sigma^2$, we have
\begin{align}
    \Pr( |X - \mu| \geq a ) \leq \frac{\sigma^2}{a^2} \ .
\end{align}
Let $X = \frac{1}{N}\sum_x f(P_U(x))$ be a random variable, then its expectation value is $\mu = \mathbb{E}_U \left[ f(P_U) \right]$ and its variance is
\begin{align}
    \sigma^2 = \frac{1}{N^2} \sum_{x, y \in \{0,1\}^n} \Cov \left( f(P_U(x)), f(P_U(y)) \right) \ ,
\end{align}
where the covariance over $\calU$ is defined by,
\begin{align}
    & \Cov \left( f(P_U(x)), f(P_U(y)) \right) \notag \\
    := & \EE_U 
    \left[ f(P_U(x)) f(P_U(y)) \right] - \EE_U \left[ f(P_U(x)) \right] \EE_U \left[ f(P_U(y)) \right] \ .
\end{align}
By definition, we have
\begin{equation}
    \sigma^2 = \frac{1}{N} \sigma_f^2 +  \frac{1}{N^2} \sum_{x, y \in \{0,1\}^n,\, x \neq y} \Cov \left( f(P_U(x)), f(P_U(y)) \right) \ .
\end{equation}
The following \cref{lemma:covariance_over_U} implies that $\sigma^2 < \sigma_f^2 /N$ for (1) $f(p) = \sum_i a_i p^{\theta_i} + b$ with positive real numbers $a_i$ and $\theta_i$ and constant $b$ when the unitary ensemble is the Haar measure, or (2) $f(p) = \sum_{i=1}^t a_i p^i + b$ with $a_i$ non-negative and $b$ a constant number when the unitary ensemble is a unitary $2t$-design.
Now, set $a = \alpha \sigma_f/\sqrt{N}$.
According to Chebyshev's inequality, we have $\Pr(|X - \mu| \geq a) \leq \frac{1}{\alpha^2}$.
That is, the probability that the random variable $X$ deviates from its mean $\mu$ by a value at least $a$ is at most $1/\alpha^2$.
This matches the statement of \cref{def:ergodicity} and proves \cref{eq:ergodicity_with_confidence_interval}.
\end{proof}

\begin{lemma}\label{lemma:covariance_over_U}
Let $q_1$ and $q_2$ be two positive real numbers and let $x \neq y$. Then, for $N\times N$ Haar-distributed unitaries, we have,
\begin{align}
    & \Cov \left( P^{q_1}_U(x), P^{q_2}_U(y) \right) \notag \\ 
    = & \Gamma(q_1+1) \Gamma(q_2+1) \left[\frac{\Gamma(N)}{\Gamma(q_1+q_2+N)} - \frac{\Gamma(N)\Gamma(N)}{\Gamma(q_1+N)\Gamma(q_2+N)}\right] \notag \\
    < & 0 \ ,
\end{align}
where $\Gamma(m) = (m - 1)!$ is the Gamma function.
This implies that for any polynomial $f(p) = \sum_{i=1}^t a_i p^i + b$ with $a_i$ non-negative, we have,
\begin{align}
    \Cov \left( f(P_U(x)), f(P_U(y)) \right) < 0 \ .
\end{align}
Moreover, the Haar random ensemble can be relaxed to unitary $2t$-design , for such a choice of $f$ whose polynomial degree is $t$.
\end{lemma}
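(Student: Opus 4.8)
The plan is to reduce everything to the distribution of the squared moduli of a single Haar-random column vector, compute the relevant moments in closed form, and then pin down the sign via strict log-convexity of the Gamma function. Concretely, since $P_U(x) = |\langle x | U | 0^n\rangle|^2$, the family $\{P_U(x)\}_x$ is precisely the vector of squared moduli of $\psi := U|0^n\rangle$, which for Haar $U$ is a uniformly random unit vector in $\mathbb{C}^N$. It is standard that $(|\psi_x|^2)_x$ is then uniformly distributed on the probability simplex, i.e.\ follows the symmetric Dirichlet law $\mathrm{Dir}(1,\dots,1)$; equivalently each marginal $|\psi_x|^2$ is $\mathrm{Beta}(1,N-1)$ and each pair $(|\psi_x|^2,|\psi_y|^2)$ with $x\neq y$ has the $\mathrm{Dir}(1,1,N-2)$ law.

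Using the Dirichlet moment identity $\EE[\prod_i p_i^{\beta_i}] = \frac{\Gamma(\alpha_0)}{\Gamma(\alpha_0+\sum_i\beta_i)}\prod_i \frac{\Gamma(\alpha_i+\beta_i)}{\Gamma(\alpha_i)}$ with all $\alpha_i=1$ and $\alpha_0=N$, I would read off $\EE_U[P_U^{q_1}(x)] = \Gamma(q_1+1)\Gamma(N)/\Gamma(N+q_1)$, the analogous expression for $y$, and the joint moment $\EE_U[P_U^{q_1}(x)P_U^{q_2}(y)] = \Gamma(q_1+1)\Gamma(q_2+1)\Gamma(N)/\Gamma(N+q_1+q_2)$. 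Subtracting the product of the marginals from the joint moment then yields the displayed closed form for the covariance directly.

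The crux is the strict negativity, which amounts to showing $\Gamma(N+q_1)\Gamma(N+q_2) < \Gamma(N)\Gamma(N+q_1+q_2)$ for $q_1,q_2>0$. I would derive this from strict convexity of $h(s):=\ln\Gamma(N+s)$ on $[0,\infty)$, whose second derivative is the trigamma function $\sum_{k\ge 0}(N+s+k)^{-2}>0$. By convexity the difference quotients of $h$ over intervals of fixed length $q_1$ are strictly increasing in their left endpoint, so $h(q_1+q_2)-h(q_2) > h(q_1)-h(0)$, which rearranges to $h(0)+h(q_1+q_2) > h(q_1)+h(q_2)$; exponentiating gives the desired inequality. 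I expect this to be the main obstacle, both because it is the only place where genuine analysis rather than bookkeeping enters and because one must check that strictness survives (it does, as $q_1,q_2>0$ and $h''>0$).

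For the polynomial consequence I would invoke bilinearity of the covariance: writing $f(p)=\sum_i a_i p^i + b$, the constant $b$ drops out and $\Cov(f(P_U(x)),f(P_U(y))) = \sum_{i,j} a_i a_j \Cov(P_U^i(x),P_U^j(y))$. Each elementary covariance is strictly negative by the first part and each coefficient product $a_i a_j$ is non-negative, so the sum is $\le 0$, and strictly negative whenever $f$ is non-constant (some $a_i>0$). Finally, to relax Haar to a unitary $2t$-design I would note that $P_U^i(x)P_U^j(y)$ is a balanced monomial of degree $i+j$ in the entries of $U$ and of $U^{\dagger}$, so each needed moment $\EE_U[P_U^i(x)P_U^j(y)]$ and $\EE_U[P_U^i(x)]$ depends only on $U^{\otimes(i+j)}\otimes(U^{\dagger})^{\otimes(i+j)}$. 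For $f$ of degree $t$ we have $i,j\le t$, hence $i+j\le 2t$; since a $2t$-design reproduces all Haar moments up to order $2t$ (and thus all lower orders), every covariance term, and therefore the whole expression, is unchanged, so the negativity persists verbatim.
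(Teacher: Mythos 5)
Your proposal is correct and follows the same skeleton as the paper's proof --- compute the first and joint moments in closed form, subtract, establish the sign, then extend to polynomials by bilinearity and to $2t$-designs by moment matching --- but two steps are executed differently, and in one case more robustly. For the moments, the paper simply cites the Haar integral formula $\int \prod_i |\langle i|U|0\rangle|^{2q_i}\,\mathrm{d}U = \Gamma(N)\prod_i\Gamma(q_i+1)/\Gamma(\sum_i q_i+N)$ from the literature, whereas you derive the same expression from the fact that the squared amplitudes of a Haar-random column are flat-Dirichlet distributed; these are the same fact in two guises, so nothing is gained or lost there. The genuine divergence is in the negativity step: the paper expands the Gamma ratios as falling products, $\Gamma(q_1+q_2+N)/\Gamma(N) = (q_1+q_2+N-1)\cdots N$ and so on, and compares them termwise --- an argument that is only transparent when $q_1,q_2$ are integers, even though the lemma is stated for arbitrary positive reals (and is later applied with $q\to 0$ via the replica trick in the $p\ln p$ proof). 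Your argument via strict convexity of $s\mapsto\ln\Gamma(N+s)$, giving $h(0)+h(q_1+q_2)>h(q_1)+h(q_2)$, proves the strict inequality uniformly for all $q_1,q_2>0$ and is therefore the cleaner route for the statement as written. You are also right to flag that the polynomial covariance is only $\le 0$ in general and strictly negative when some $a_i>0$; the paper glosses over this degenerate case.
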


The proof to this lemma is based on calculating the corresponding Haar integrals~\cite{collins_integration_2006}, and we leave it in \cref{sec:analysis_random_unitary_condition}.
Here, we remark that for a polynomial $f(p)$ with degree $t$, the reason why the Haar random ensemble can be relaxed to a unitary $2t$-design~\cite{dankert_exact_2009, gross_evenly_2007}, is because the covariance $\Cov \left( f(P_U(x)), f(P_U(y)) \right)$ will have the same value for both Haar measure and unitary $2t$-design. 
By \cref{def:unitary-t-design}, computing statistical quantities up to $2t$-th moment over a unitary $2t$-design cannot be distinguished from computing them over Haar measure of the unitary group, and for a degree-$t$ polynomial, the covariance is a $2t$-th moment, since we have $U$ and $U^{\dagger}$ appear $2t$ times in its expression.
Finally, we would like to remark that, when the average is taken with respect to the Haar measure, the left- and right-translation invariance of the Haar measure eliminates the dependence on $x_0$. 
Based on the above discussion, we will omit $x_0$ and simply write $\EE_U[f(P_U)]$ in the remaining text when the context is clear.

\cref{thm:ergodicity_polynomials} establishes ergodicity relative to polynomials.
Using \cref{lemma:covariance_over_U}, one can also prove ergodicity relative logarithmic functions. 
In particular, we are interested in the function $f(p) = p \ln{p}$.
Such a function finds applications in the logarithmic cross-entropy benchmarking~\cite{boixo_characterizing_2018}.
We will also use the replica trick, a technique from statistical physics, stating that $\ln p = \lim_{i \rightarrow 0} \frac{p^i-1}{i}$.
With the replica trick and \cref{lemma:covariance_over_U}, we have the following theorem, whose proof is presented in \cref{appendix:ergodicity_proof_plogp}. 
\begin{theorem}[Ergodicity relative to logarithm] \label{thm:ergodicity_log}
    A Haar random unitary ensemble $\calU$ satisfies ergodicity relative to the function $f(p) = p \ln p$. More explicitly, for such a function $f(p)$, we have
\begin{equation}\label{eq:ergodicity_with_confidence_interval_plogp}
    \Pr(\left| \mathbb{E}_U \left[ f(P_U(x_0)) \right] - \frac{1}{N}\sum_{x \in \{0,1\}^n} f(P_U(x)) \right|\geq \alpha \frac{\sigma_f}{\sqrt{N}})\leq \frac{1}{\alpha^2} \, .
\end{equation}
\end{theorem}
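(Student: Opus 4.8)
The plan is to reuse the proof of \cref{thm:ergodicity_polynomials} almost verbatim, reducing everything to a covariance sign. Setting $X = \frac{1}{N}\sum_x f(P_U(x))$ with $f(p)=p\ln p$, Chebyshev's inequality gives \cref{eq:ergodicity_with_confidence_interval_plogp} with $a=\alpha\sigma_f/\sqrt{N}$ as soon as the variance satisfies $\sigma^2\le\sigma_f^2/N$. Since the Haar measure is permutation invariant on bit strings, the joint law of $(P_U(x),P_U(y))$ is identical for every pair $x\neq y$, so
$\sigma^2 = \frac{1}{N}\sigma_f^2 + \frac{N-1}{N}\Cov\!\left(f(P_U(x)),f(P_U(y))\right)$,
and the whole theorem comes down to showing that this single pairwise covariance is non-positive for $f(p)=p\ln p$.

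To compute it I would invoke the replica trick $\ln p = \lim_{i\to0}(p^i-1)/i$, equivalently writing $p\ln p = \partial_q p^q|_{q=1}$. Differentiating under the expectation, which is justified because $p^q$ and $p^q\ln p$ are bounded on $[0,1]$ near $q=1$ and because the closed form of \cref{lemma:covariance_over_U} is analytic in the exponents there, yields the clean identity
\begin{equation}
  \Cov\bigl(P_U(x)\ln P_U(x),\, P_U(y)\ln P_U(y)\bigr) = \partial_{q_1}\partial_{q_2} G(q_1,q_2)\big|_{q_1=q_2=1},
\end{equation}
where $G(q_1,q_2):=\Cov\!\left(P_U^{q_1}(x),P_U^{q_2}(y)\right)$ is exactly the quantity evaluated in \cref{lemma:covariance_over_U}. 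The useful structural observation is that $G=A-B$ with $B(q_1,q_2)=\EE[P_U^{q_1}(x)]\,\EE[P_U^{q_2}(y)]=h(q_1)h(q_2)$ the product of marginal moments $h(q)=\Gamma(q+1)\Gamma(N)/\Gamma(q+N)$. Because $B$ factorizes, $\partial_{q_1}\partial_{q_2}B|_{(1,1)}=h'(1)^2\ge0$, and since $G=A-B$ this is precisely the term driving the covariance toward negativity.

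The remaining computation is to differentiate $A(q_1,q_2)=\Gamma(q_1+1)\Gamma(q_2+1)\Gamma(N)/\Gamma(q_1+q_2+N)$ using $\partial_q\ln\Gamma=\psi$, which expresses both $\partial_{q_1}\partial_{q_2}A|_{(1,1)}$ and $h'(1)^2$ through the digamma differences $\psi(2)-\psi(N{+}1)$, $\psi(2)-\psi(N{+}2)$ and the trigamma value $\psi'(N{+}2)$. I expect the main obstacle to be the final sign check: the two positive contributions nearly cancel, so one must establish the delicate inequality $\partial_{q_1}\partial_{q_2}A|_{(1,1)}<h'(1)^2$, which after simplification becomes an elementary but tight comparison of harmonic numbers whose leading terms cancel to leave a negative remainder. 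Carrying out this estimate cleanly, and noting that it requires $N$ to lie in the relevant large-Hilbert-space regime rather than the degenerate single-qubit case $N=2$ (where the marginal constraint $P_U(x)+P_U(y)=1$ is exact), is the crux; once it is in hand, $\sigma^2<\sigma_f^2/N$ follows and Chebyshev's inequality closes the argument exactly as in \cref{thm:ergodicity_polynomials}.
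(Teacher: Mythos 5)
Your proposal follows essentially the same route as the paper's proof: reduce the theorem to the sign of the single pairwise covariance $\Cov\bigl(f(P_U(x)),f(P_U(y))\bigr)$, evaluate that covariance via the replica trick applied to the Gamma-function formula of \cref{lemma:covariance_over_U} (your mixed partial $\partial_{q_1}\partial_{q_2}G|_{(1,1)}$ is exactly the paper's limit of $\Cov(g_i,g_i)$ with $g_i(p)=(p^{i+1}-p)/i$), show the resulting polygamma expression is negative for sufficiently large $N$, and close with Chebyshev as in \cref{thm:ergodicity_polynomials}. The argument is correct, including your caveat that negativity is only established in the large-$N$ regime, which matches the paper's "for sufficiently large $N$" qualification.
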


\medskip
\textbf{Numerical support.}
For the case $f(p) = p$, we can apply \cref{lemma:covariance_over_U} to obtain
\begin{align}
\mathbb{E}_U \left[ P_U(x) P_U(y) \right] &= \frac{1}{N} \frac{1}{N+1} \\
\Cov(P_U(x), P_U(y)) &= -\frac{1}{N^2} \frac{1}{N+1} \ .
\end{align}
We perform numerical experiments to demonstrate the scaling of the expectation $\EE_U[P_U(x)]$, the second moment $\EE_U [P_U(x) P_U(y)]$ and the covariance $\Cov(P_U(x), P_U(y))$, which is relevant to \cref{def:ergodicity} for the case $f(p) = p$, and the results are shown in \cref{fig:conjectures}.
On the other hand, in \cref{fig:numeric_RUC}, we provide an explicit demonstration of the ergodicity condition for $f(p) = -\ln{p}$, which is not covered by \cref{thm:ergodicity_polynomials} { and \cref{thm:ergodicity_log}}.
Here, $|\mathrm{Error}|$ is the absolute difference between  $\mathbb{E}_U \left[ f(P_U) \right]$ and $N^{-1}\sum_{x}f(P_U(x))$ (the blue line). We also plot the scaling of error $O(\sigma_f/\sqrt{N})$ as a reference (the orange dashed line).

\begin{figure}[t]
    \centering
    \includegraphics[width = 90mm]{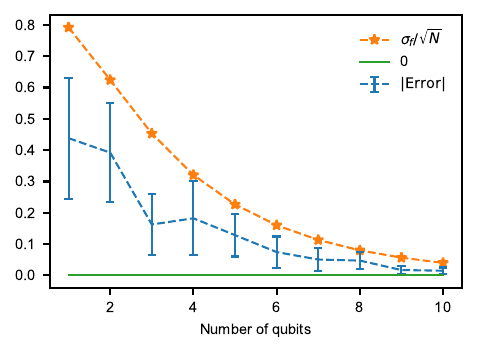}
    \caption{Numerical evidence of the ergdicity condition relative to $f(p) = -\ln{p}$.
    For each particular number of qubit $n$, 10 different instances of $U$ are randomly chosen for the calculation of $N^{-1}\sum_{x}f(P_U(x))$, while $\mathbb{E}_U[f(P_U)] = \int f(p) \Pr(p) \dd{p}$ is calculated as if $U$ is Haar random; 
    $|\text{Error}|$ is the left-hand side of \cref{eq:ergodic}.}
    \label{fig:numeric_RUC}
\end{figure}

\section{Generalized cross-entropy benchmarking}

\begin{figure*}[th]
    \centering
    \includegraphics[width = 18cm]{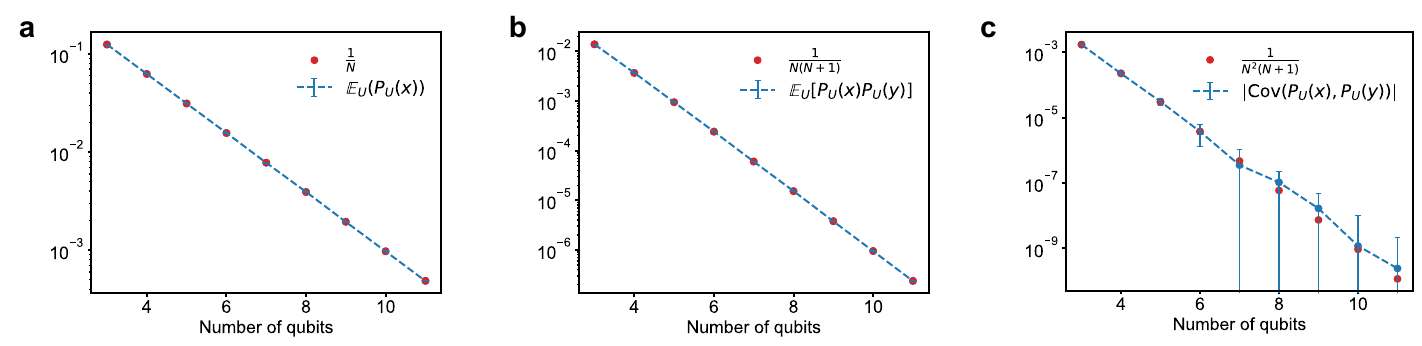}
    \caption{ The scaling of \textbf{(a)} the expectation $\EE_U [P_U(x)]$, \textbf{(b)} the second moment $\EE_U [ P_U(x) P_U(y)]$, and \textbf{(c)} the covariance $\Cov( P_U(x), P_U(y) )$. }
    \label{fig:conjectures}
\end{figure*}

\subsection{General framework}\label{sec:general_framework}

We propose a benchmarking scheme for quantum chips based on the ergodicity of random quantum circuits, which is a generalization of cross-entropy benchmarking~\cite{boixo_characterizing_2018,arute_quantum_2019}. 
The intuition comes from the observation that the ergodicity condition holds for noiseless chips and is violated when the noise level is high. Therefore, the deviation from the ergodicity condition can be used to quantify the noise level of a quantum chip. 
Moreover, we show that the deviation of ergodicity gives a linear cross entropy estimator with a quadratic scheme function $f(p)=p^2$ {and it gives the logarithm cross entropy estimator with $f(p)=p \ln p$}~\cite{arute_quantum_2019}.

We define the deviation of ergodicity as,
\begin{equation}
   \mathrm{DE}_f:= \abs{\EE_U [f(P_U(x))] -  C_f(P_U, Q_U)} \, ,
\end{equation}
where $C_f(P_U, Q_U)$ is a measure of the correlation between the ideal output distribution $\{ P_U(x) \}$ and the experimental distribution $\{ Q_U(x) \}$,
\begin{align}
    C_f(P_U, Q_U) := \frac{1}{N} \sum_{x \in \{0, 1\}^n} \frac{f(P_U(x))}{P_U(x)} Q_U(x) \ .
\end{align}
It recovers the second term in \cref{eq:ergodic} when $Q_U(x) = P_U(x)$. The idea behind this definition is that \cref{thm:ergodicity_polynomials} guarantees that the deviation is small for the noiseless circuit, and a large deviation indicates a high noise level.
Note that $C_f(P_U, Q_U)$ recovers the linear cross entropy when $f(P_U(x)) = P^2_U(x)$ {and recovers the logarithmic cross entropy when $f(P_U(x)) = P_U(x) \ln P_U(x)$}, up to rescaling and shift.

To evaluate the deviation of ergodicity, the first term $\EE_U [f(P_U(x))]$ can be calculated analytically using the Haar integral, if the circuit ensemble is chosen properly.
The second term can be estimated from experiments.
For a smooth function $f(p)$, if one takes $p = P_U(x_0)$ for some bit string $x_0$, then its average over the Haar-distributed $U$ can be evaluated by
\begin{align}\label{eq:ensemble_average_f}
    \mathbb{E}_{U \in \Haar} [f(P_U(x_0))] = \int_0^1 f(p) \Pr(p) \dd{p} \ ,
\end{align}
where $\Pr(p)$ is the probability density of the Porter-Thomas distribution~\cite{boixo_characterizing_2018} (see also \cref{sec:haar_output_prob}).
In particular, for $f(p)=p^i$ with $i$ being an positive integer, we computed in \cref{appendix:calculationforpi} that
\begin{equation}\label{eq:E_Upolynomial}
\mathbb{E}_{U \in \Haar} [P^i_U(x_0))] = \frac{i!}{N^i} + e^{-\Omega(N)} \, .
\end{equation}

As for the correlation term $C_f(P_U, Q_U)$, it can be estimated from experimental data as follows. (i) Sample $\{x_1, x_2, \cdots, x_T\}$ from the experimental output distribution $\{ Q_U(x) \}$.  (ii) For each sample $x_i$, compute its ideal output probability $P_U(x_i)$. (iii)  Compute the sample average
\begin{align}\label{eq:C_f_estimate}
    \widetilde{C}_f(P_U, Q_U) := \frac{1}{NT} \sum_{i = 1}^T \frac{f(P_U(x_i))}{P_U(x_i)}\ ,
\end{align}
which is an unbiased estimator to $C_f(P_U, Q_U)$.

To bound the estimation error of $\widetilde{C}_f(P_U, Q_U)$, let $g(x) = \frac{f(P_U(x))}{N P_U(x)}$ and we have the expectation of $g(x)$ over the experimental output distribution $\{ Q_U(x) \}$ is given by
\begin{align}
    \expval{g}_x := \sum_{x \in \{0, 1\}^n} \frac{f(P_U(x))}{N P_U(x)} Q_U(x) 
    =  C_f(P_U, Q_U) \, .
\end{align}
Then, the error of the estimation is given by the Chebyshev's inequality,
\begin{equation}\label{eq:exp_estimate}
    \left| \widetilde{C}_f(P_U, Q_U) - {C}_f(P_U, Q_U) \right| = O \left( \sqrt{\frac{\Var_{Q_U}[g]}{T}} \right) \ ,
\end{equation}
with high probability, where $\Var_{Q_U}[g]:=\ev*{g^2}_x - \ev{g}_x^2$ is the variance of $g(x)$ over the distribution $\left\{ Q_U(x) \right\}$.
Therefore, if the number of samples $T$ is large enough, then the sample mean will approximate the expectation $\ev{g}_x$ well. In practice, $T$ can be chosen as a polynomial in $\log N$, which is significantly smaller than the Hilbert dimension $N$. This scaling reduces significant sampling costs in benchmarking the chip.  In later case studies, one can see that that $T=O(\mathrm{poly}(\log N))$ scaling is sufficient for the benchmarking scheme.

Now, we denote the experimental estimation for the deviation of ergodicity by
\begin{equation}\label{eq:experiment_estimation_DoE}
   \widetilde{\mathrm{DE}}_f:= \abs{\EE_U [f(P_U)] -  \widetilde{C}_f(P_U, Q_U)} \ ,
\end{equation}
and use it to measure the quantum chip's performance. 
First, we choose the circuit ensemble and an appropriate scheme function $f$.
As indicated by \cref{thm:ergodicity_polynomials}, if the scheme function is a degree-$t$ polynomial, then the circuit ensemble $\calU$ should be chosen to be a unitary $2t$-design. 
Second, the term $\EE_U [f(P_U)]$ is computed analytically, using the property of unitary $2t$-design, i.e., $\EE_U [f(P_U)] = \EE_{U \in \Haar} [f(P_U)]$.
Finally, we experimentally estimate $\widetilde{C}_f(P_U, Q_U)$ using \cref{eq:C_f_estimate}, which gives $\widetilde{\mathrm{DE}}_f$.
The whole procedure is shown in \cref{fig:benchmark_procedure}.

\subsection{Case studies}
\label{subsec:case_studies}

\paragraph{Global depolarizing noise.}
As a toy model, we analyze the deviation of ergodicity for a quantum chip with global depolarizing noise. Under such a model, the output state $\rho_U$ of the quantum chip is given by,
\begin{equation}
    \rho_U = F \ket{\psi_U}\bra{\psi_U} + (1-F) \frac{\mathbb{I}}{N} \, ,
\end{equation}
where $\ket{\psi_U} = U \ket{0}^{\otimes n}$ is the ideal output of the noiseless quantum circuit $U$ and $F$ is equal to the circuit fidelity $\mel{\psi_U}{\rho_U}{\psi_U}$ up to an additive factor at most $1/N$. 
This noise model is also known as the global white noise in Ref.~\cite{dalzell_random_2021}.
Then, the experimental probability is given by
\begin{equation}
    Q_U(x) = F P_U(x) + (1-F) \frac{1}{N} \, .
\end{equation}
Thus, for $f(p) = N^i p^i$, we have $\ev{g}_x = F N^{i-1}  \sum_x P_U(x)^i + (1-F) N^{i-2} \sum_x P_U(x)^{i-1}$, which, according to \cref{thm:ergodicity_polynomials} and \cref{eq:E_Upolynomial}, gives 
\begin{equation}
    \ev{g}_x =F i! + (1-F) (i-1)! + O \left(\frac{1}{\sqrt{N}}\right) \, .
\end{equation} 
A more detailed derivation can be found in \cref{appendix:exampleofDeviationErgodicityforCompletely_Noisy_Chips}.
Similarly, $\ev*{g^2}_x - \ev{g}_x^2 = O(1)$. 
Then, from \cref{eq:exp_estimate}, the estimation of the correlation is equal to
\begin{equation}
\widetilde{C}_f(P_U, Q_U)= F i! + (1-F) (i-1)! \pm O\left( \frac{1}{\sqrt{T}} \right) \, ,
\end{equation}
with high probability.
Next, by substituting $\widetilde{C}_f(P_U, Q_U)$ and \cref{eq:E_Upolynomial} into \cref{eq:experiment_estimation_DoE}, we obtain
\begin{equation}
    \widetilde{\mathrm{DE}}_f = (1-F) (i-1)! (i-1) \pm O\left( \frac{1}{\sqrt{T}} \right)\, .
\end{equation}
Finally, rearranging this expression allows us to estimate the fidelity using the observed deviation of ergodicity from the experiment:
\begin{equation}
  F =1- \frac{1}{(i-1)!(i-1)} \, \widetilde{\mathrm{DE}}_f \pm O\left( \frac{1}{\sqrt{T}} \right) \, .
\end{equation}
Therefore, for the global depolarizing noise, the larger the observed deviation of ergodicity, the smaller the circuit fidelity.

\medskip
\paragraph{Weak correlated noise.}
Our framework recovers Google's results on linear cross-entropy benchmarking with when $f(p) = N^2 p^2$~\cite{arute_quantum_2019}. 
For a random quantum circuit $U$, the experimental output state can be decomposed into the following form,
\begin{align}
    \rho_U = F \dyad{\psi_U} + (1 - F) \chi_{U} \ , \label{eq:noise}
\end{align}
where $\ket{\psi_U} := U \ket{0^n}$ is the ideal output state,  $F := \mel{\psi_U}{\rho_U}{\psi_U}$ is the circuit fidelity, and $\chi_U$ is an operator with unit trace that will be used to characterize the experimental noise. For the output state $\rho_U = F \dyad{\psi_U} + (1 - F) \chi_U$, the experimental output distribution $Q_U(x):=\mel{x}{\rho_U}{x}$ can be written as
\begin{equation}\label{eq:noisy_distribution}
    Q_U(x) = F P_U(x) + (1-F) \chi_U(x) \ , 
\end{equation}
where $\chi_U(x):=\mel{x}{\chi_U}{x}$ is the quasiprobability associated with the error part  $\chi_U$.
Here, $\chi_U(x)$ is a quasiprobability because $\chi_U$ may not be positive. We prove the following proposition.

\begin{proposition}[Estimating circuit fidelity for weakly correlated noise]\label{theorem:estimation_circuit_fidelity}
For $f(p)=N^2 p^2$, assume the quasiprobability $\chi_U(x_0)$ obtained from the noise $\chi_U$ is weakly correlated with $P_U(x_0)$ for any fixed $x_0$ in the sense that
\begin{equation}\label{eq:independence_assumption}
    \mathbb{E}_U [\chi_U f(P_U)] = \mathbb{E}_U [\chi_U] \mathbb{E}_U [f(P_U)] + O\left( \frac{1}{N \sqrt{N}} \right) \ ,
\end{equation}
where the dependence on $x_0$ is omitted,
and the noise satisfies $\EE_U [\chi_U(x_0)] =\frac{1}{N}$.
Then, the deviation of ergodicity gives an estimation of the circuit fidelity, 
\begin{equation}
    F = 1- \widetilde{\mathrm{DE}}_f  \pm O\left( \frac{1}{\sqrt{T}} \right) \, .
\end{equation}
\end{proposition}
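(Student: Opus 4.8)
The plan is to specialize \cref{thm:ergodicity_polynomials} to $f(p) = N^2 p^2$, for which \cref{eq:E_Upolynomial} gives $\mathbb{E}_U[f(P_U)] = N^2\,\mathbb{E}_U[P_U^2] = 2 + e^{-\Omega(N)}$. Since $f(p)/p = N^2 p$, the estimator kernel is $g(x) = f(P_U(x))/(N P_U(x)) = N P_U(x)$, so $C_f(P_U,Q_U) = N\sum_x P_U(x) Q_U(x)$. The target reduces to showing $C_f = F + 1$ up to vanishing corrections, since then $\widetilde{\mathrm{DE}}_f = |\,\mathbb{E}_U[f(P_U)] - \widetilde{C}_f\,| = |\,2 - (F+1)\,| = (1-F) \pm O(1/\sqrt{T})$, giving the claim. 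First I would substitute the noise decomposition \cref{eq:noisy_distribution} to split $C_f = F\,N\sum_x P_U(x)^2 + (1-F)\,N\sum_x P_U(x)\chi_U(x)$ into a signal term and a noise cross-term.

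For the signal term I would invoke \cref{thm:ergodicity_polynomials} directly: because $N\sum_x P_U(x)^2 = \tfrac{1}{N}\sum_x f(P_U(x))$ and $f$ is a degree-two polynomial with non-negative coefficients, ergodicity yields $N\sum_x P_U(x)^2 = \mathbb{E}_U[f(P_U)] + O(\sigma_f/\sqrt{N}) = 2 + O(1/\sqrt{N})$ with high probability (here $\sigma_f = O(1)$), so the signal term equals $2F$ up to $O(1/\sqrt{N})$. The heart of the argument is the noise cross-term $W := N\sum_x P_U(x)\chi_U(x)$, which I want to certify concentrates at $1$. For this I would pass to the circuit ensemble and use the weak-correlation hypothesis \cref{eq:independence_assumption} to decouple $\chi_U$ from the output statistics, together with $\mathbb{E}_U[\chi_U]=1/N$ and the Porter--Thomas normalization $\mathbb{E}_U[P_U]=1/N$, so that the factorized value of the cross-term is pinned at $1$; a Chebyshev-type second-moment estimate, again fed by the factorization in \cref{eq:independence_assumption}, then upgrades this to $W = 1 + O(1/\sqrt{N})$ for a typical circuit. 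Combining the two contributions gives $C_f = 2F + (1-F) = F + 1$ up to $O(1/\sqrt{N})$.

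Finally I would control the statistical error of the empirical estimator via \cref{eq:exp_estimate}, where $|\widetilde{C}_f - C_f| = O(\sqrt{\Var_{Q_U}[g]/T})$. I would bound $\Var_{Q_U}[g] = O(1)$ by expanding $\ev{g^2}_x = N^2\sum_x P_U(x)^2 Q_U(x)$ under \cref{eq:noisy_distribution}: the signal piece $N^2 F\sum_x P_U^3$ concentrates at $3!\,F$ by \cref{thm:ergodicity_polynomials}, while the noise piece is $N^2(1-F)\sum_x P_U^2\chi_U = (1-F)\sum_x f(P_U)\chi_U$, which is exactly the combination $\mathbb{E}_U[\chi_U f(P_U)]$ appearing in the hypothesis \cref{eq:independence_assumption}; this certifies $\ev{g^2}_x = O(1)$ and hence $\Var_{Q_U}[g]=O(1)$. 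Assembling the three estimates gives $\widetilde{\mathrm{DE}}_f = (1-F) \pm O(1/\sqrt{T})$ after absorbing the $O(1/\sqrt{N})$ corrections, which rearranges to the stated formula.

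The step I expect to be the main obstacle is the noise cross-term $W$. Because $\chi_U$ is an essentially arbitrary, possibly non-positive and $U$-dependent unit-trace operator, nothing pins down $\sum_x P_U(x)\chi_U(x)$ for a single circuit without an additional hypothesis, and the delicate point is to bridge the average over output bit strings $x$ with the average over circuits $U$. The entire content of the proposition is that the weak-correlation condition \cref{eq:independence_assumption} is precisely strong enough to force both the mean and the fluctuations of this cross-term to their ``uncorrelated'' value, while the ergodicity of \cref{thm:ergodicity_polynomials} simultaneously handles the signal term and the sampling variance; reconciling the stated degree-two form of the hypothesis with the degree-one cross-term that controls the mean of $W$ is the part of the argument I would scrutinize most carefully.
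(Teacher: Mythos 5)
Your overall architecture is the same as the paper's: decompose $C_f(P_U,Q_U)=F\,C_f(P_U,P_U)+(1-F)\,C_f(P_U,\chi_U)$ by linearity in $Q_U$, handle the signal term with \cref{thm:ergodicity_polynomials} to get $C_f(P_U,P_U)=2\pm O(1/\sqrt N)$, pin the ensemble mean of the cross-term at $1$ using \cref{eq:independence_assumption} together with $\EE_U[\chi_U(x)]=1/N$, and then control the sampling error via \cref{eq:exp_estimate}. The one place you genuinely diverge is the concentration of the cross-term $W=N\sum_x P_U(x)\chi_U(x)$ around its ensemble mean: you propose a Chebyshev-type second-moment bound ``fed by the factorization in \cref{eq:independence_assumption}'', whereas the paper argues that $C_f(P_U,\chi_U)$ is Lipschitz in $U$ and invokes Levy's lemma. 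Your route has a gap here: \cref{eq:independence_assumption} only constrains the first moment in $\chi_U$, namely $\EE_U[\chi_U(x_0)f(P_U(x_0))]$ for a single bit string, so it gives you $\EE_U[W]$ but says nothing about $\Var_U[W]$, which requires cross-correlations of the form $\EE_U[\chi_U(x)\chi_U(y)P_U(x)P_U(y)]$ for $x\neq y$. Since $\chi_U$ is an arbitrary unit-trace, possibly non-positive and $U$-dependent operator, nothing in the stated hypotheses bounds these, and a Chebyshev argument cannot close without an additional assumption. The paper's Lipschitz/Levy route is its way around exactly this obstruction (though its own application is loose: it only exhibits a \emph{lower} bound on the Lipschitz constant, which weakens rather than strengthens the Levy tail).

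Two further points you raise are well taken and worth keeping. First, the degree mismatch you flag is real and is present in the paper itself: the hypothesis is stated for $f(P_U)=N^2P_U^2$, but the mean of the cross-term requires factorization of $\EE_U[\chi_U\cdot N^2 P_U]$, a degree-one combination; the paper silently applies the weak-correlation idea to that degree-one quantity. Second, your explicit $\Var_{Q_U}[g]=O(1)$ bound for the sampling error is more careful than the paper, which simply asserts the $\pm O(1/\sqrt T)$ term---though note that your identification of $N^2\sum_x P_U^2\chi_U$ with the hypothesis again conflates a sum over bit strings for fixed $U$ with the ensemble average over $U$ at fixed $x_0$, so even there an ergodicity-type step is being used implicitly.
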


\begin{proof}
The noise distribution \cref{eq:noisy_distribution} and the linearity of $ C_f(P_U, Q_U)$ in $\{Q_U\}$ implies that
\begin{equation}
        C_f(P_U, Q_U) =F  C_f(P_U, P_U) + (1-F)   C_f(P_U, \chi_U) \, .
\end{equation}
The ergodicity \cref{thm:ergodicity_polynomials} establishes that
\begin{equation}
    C_f(P_U, P_U) =\mathbb{E}_{U} [N^2 P^2_U(x_0))]\pm O\left( \frac{1}{\sqrt{N}} \right) = 2 \pm O\left( \frac{1}{\sqrt{N}} \right) \, .
\end{equation}
Because of the weak correlation assumed, we have
\begin{equation}
     \EE_U [C_f(P_U, \chi_U)] =\frac{1}{N}\sum_{x \in \{0,1\}^n} \EE_U[N^2 P_U]\EE_U[\chi_U(x)] + O \left(\frac{1}{N \sqrt{N}} \right)  \, .
\end{equation}
Because $\EE_U [\chi_U(x)] = \frac{1}{N}$ for all $x$, then 
\begin{equation}
    \EE_U [C_f(P_U, \chi_U)] = 1 + O \left(\frac{1}{N\sqrt{N}} \right) \, .
\end{equation}
It is direct to verify that $C_f(P_U, \chi_U)$ is a Lipschitz function on $U$ with Lipschitz constant $\eta \geq \frac{N}{\sqrt{2}}$. Define 
\begin{equation}
    \delta = \abs{C_f(P_U, \chi_U) - \EE_U [C_f(P_U, \chi_U)]}
\end{equation} 
and by Levy's lemma (see \cref{sec:levy_lemma_discussion}), we have
\begin{equation}
   \Pr[ \delta \geq \frac{1}{\sqrt{N}}] \leq 2 \exp[-C(2N+1)N] \, ,
\end{equation}
where $C$ is some constant. Therefore, typically (with high probability) we have
\begin{equation}
    C_f(P_U, \chi_U)=\EE_U [C_f(P_U, \chi_U)] \pm O\left( \frac{1}{\sqrt{N}} \right) = 1 + O\left( \frac{1}{\sqrt{N}} \right) \, ,
\end{equation}
for any $U$. This implies that, typically, we have
\begin{equation}\label{eq:observedC_f_weak_noise}
\widetilde{C}_f(P_U, Q_U) = (1+F) \pm O\left( \frac{1}{\sqrt{T}} \right), 
\end{equation}
and in terms of the derivation of ergodicity,
\begin{equation}
    \widetilde{\mathrm{DE}}_f =(1-F) \pm O\left( \frac{1}{\sqrt{T}} \right) \, ,
\end{equation}
which concludes the proof.
\end{proof}

\begin{figure}[t]
    \centering
    \includegraphics[width=90mm]{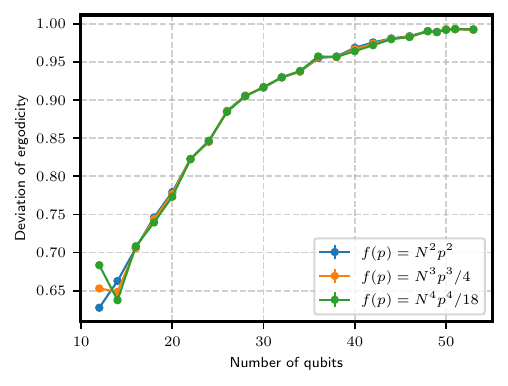}
    \caption{\textbf{Deviation of ergodicity $\widetilde{\mathrm{DE}}_{f}$ with various scheme functions $f$ for Google's Sycamore quantum chip~\cite{arute_quantum_2019} operating with different number of qubits $n$.} We evaluate the deviation of ergodicity of Google's quantum supremacy experiment based on their experimental data~\cite{arute_quantum_2019}. Here $N = 2^n$. Error bar is too small to be visible.}
\label{fig:DE_vs_number_of_qubits_of_Google}
\end{figure}

The above proposition recovers Google's claim that the linear XEB fidelity can be used to estimate the circuit fidelity when the noise is weakly correlated. Specifically, the linear XEB fidelity is defined as
\begin{equation}
    \calF_{\mathrm{XEB}} =  \frac{N}{T} \sum_{i = 1}^T P_U(x_i) -1 \,,
\end{equation}
where $x_i$ ranges over $T$ experimental samples.
Note that when $f(p) = N^2 p^2$, the correlation function $C_f(P_U, Q_U)$ reduces to the linear XEB fidelity,
\begin{equation}
    \widetilde{C}_f(P_U, Q_U) = \calF_{\mathrm{XEB}}+1 \ .
\end{equation}
Thus \cref{eq:observedC_f_weak_noise} is equivalent to
\begin{equation}
    \calF_{\mathrm{XEB}} = F \pm O\left( \frac{1}{\sqrt{T}} \right) \, ,
\end{equation}
which establishes the equivalence of the circuit fidelity to the linear XEB fidelity.

\paragraph{General noise model.}  
An important question is whether the deviation to ergodicity can reliably serve as a proxy for circuit fidelity under more general noise models. In Ref.~\cite{dalzell_random_2021}, it was rigorously shown that for weak and unital local noise (e.g., depolarizing, dephasing, or Pauli noise), random quantum circuits scramble local errors into an effective global depolarizing noise. This scrambling property justifies the use of deviation to ergodicity to estimate fidelity when the noise in real-world quantum devices is weak, local, and unital. Many experimental noise models, such as depolarizing and dephasing noise, fall under this unital and local noise framework. As such, the global depolarizing noise becomes a valid approximation, ensuring the reliability of estimating circuit fidelity using the deviation to ergodicity in these scenarios.

However, the global depolarizing noise approximation can break down in the presence of strongly correlated noise or coherent noise~\cite{dalzell_random_2021}, where the deviation to ergodicity may fail to estimate circuit fidelity. Furthermore, in adversarial scenarios, where noise can be deliberately engineered to exploit vulnerabilities in benchmarking schemes, the deviation to ergodicity can fail as a reliable proxy for fidelity. Supporting evidence from Ref.~\cite{gao_limitations_2024} highlights that adversarially correlated noise can amplify the discrepancy between fidelity and benchmark results, such as linear XEB, which corresponds to the deviation to ergodicity with the scheme function $f(p) = p^2$. Specifically, Ref.~\cite{gao_limitations_2024} demonstrates that errors localized in specific spatial or temporal regions can significantly distort the particle distributions in their diffusion-reaction model. Such inhomogeneous error configurations can artificially elevate benchmarking metrics like linear XEB without a corresponding improvement in the actual circuit fidelity. Similarly, for ergodicity-based fidelity estimation, such adversarial noise configurations can lead to substantial misinterpretations of the system's fidelity. A rigorous analysis of deviation to ergodicity under adversarial noise could be established by employing techniques developed in Ref.~\cite{gao_limitations_2024}.

In summary, while the global depolarizing noise assumption and weakly correlated noise ensure that the deviation to ergodicity is a robust framework for estimating circuit fidelity in the presence of local and weakly correlated noise, its reliability diminishes in adversarial or strongly correlated noise regimes. Extending these methods to address such noise models remains an important open challenge and a promising direction for future research.

\subsection{Experimental Results}

We check the performance of Google's 53-qubit quantum chip\cite{arute_quantum_2019} using our benchmarking scheme.  Using their public data, we calculate the experimental deviation of ergodicity $\widetilde{\mathrm{DE}}_{f}$ for $f(p) = N^i p^i/[(i-1)!(i-1)]$ with $i=2,3,4$ respectively and plotted the results in \cref{fig:DE_vs_number_of_qubits_of_Google}. Here $N^i/[(i-1)!(i-1)]$ is used to normalize the deviation of ergodicity, such that the maximal deviation is 1, enabling fair comparison for different scheme functions. We calculate the experimental deviation of ergodicity for the chip operating with different number of qubits $n$, where each point is calculated using 50,000 samples obtained in Google's experiments. 
More specifically, we use their experimental data obtained by applying $14$ random gate layers of the EFGH pattern with $6$ elided edges (defined in their paper~\cite{arute_quantum_2019}).  By analyzing the data, we find that the deviation of ergodicity (normalized) becomes larger when the number of qubits grows, suggesting the fidelity of the chip drops when more qubits are operating. We also find that the normalized deviation of ergodicity is similar for the three polynomials, which is an evidence suggesting that the noise correlation is weak and does not destroy the relation between the fidelity and the deviation of ergodicity---for otherwise the fidelity estimated using different polynomials can be different.

\subsection{Relation to other results}

We now compare our results with several recent studies that examine noisy random quantum circuits by mapping random circuit sampling to models in statistical mechanics \cite{liu_benchmarking_2021, dalzell_random_2021, deshpande_tight_2021, gao_limitations_2024}. 
These studies provide insights into noisy random quantum circuits and the relation between XEB fidelity and circuit fidelity. Our approach, which utilizes ergodicity, provides an alternative confirmation that XEB fidelity can be a reliable approximation of circuit fidelity, provided that the noise is minor and weakly correlated. 
In Ref.~\cite{dalzell_random_2021}, it was shown that if the local noise is weak and unital, then a random quantum circuit will transform the local noise into a global depolarizing noise, hence justifying the weak correlation assumption. 
With their analysis, the XEB fidelity will approach the depolarization fidelity. 
However, in Ref.~\cite{gao_limitations_2024}, they studied the conditions under which the XEB fidelity agrees with or deviates from the circuit fidelity.
Specifically, they found that if the noise is sufficiently weak and homogeneous, the discrepancy between these two quantities will be small.
But if the noise is highly correlated, their results show that the XEB fidelity can be orders of magnitude higher than the circuit fidelity, challenging the XQUATH conjectured by Aaronson and Gunn~\cite{aaronson_classical_2019}. 
We would like to remark that our analysis in \cref{subsec:case_studies} dose not contradict to their results, as we consider weakly correlated noise. 
Instead, our analysis aligns with previous works~\cite{dalzell_random_2021,gao_limitations_2024} and contributes to this line of research using alternative tools and perspectives.

\section{Conclusion}
\label{sec:conclusion}

We introduce the concept of ergodicity to random circuit sampling and formulate it as a condition where the average of a function of output probabilities over a random circuit ensemble is close to the average of the function over output strings of one circuit instance from the ensemble. 
We proved the ergodicity holds for unitary $2t$-design relative to polynomials with degree at most $t$ and positive coefficients. Our results establish a useful tool in the analysis of random circuit sampling, with a clear characterization of its applicable regime. Based on the ergodicity condition, we propose a benchmarking scheme of quantum chips applicable in the supremacy regimes. 
As case studies, our benchmarking scheme can recover circuit fidelity for global depolarizing noise, and can recover Google's result on linear cross-entropy benchmarking for minor and weakly-correlated noise.

Our study formulates a novel property of random circuit sampling, establishes a notable connection between ergodicity and noise levels in these circuits, and provides innovative insights for quantum benchmarking frameworks in the supremacy regimes. Future investigations can be made to test the experimental performance of the benchmarking scheme via deviation of ergodicity, to compare the results to other benchmarking tools, and to study the impact on the benchmarking of correlated noise. It is also interesting to connect our results for $f(p)=p\ln p$ to quantum thermodynamics and investigate the physical significance of the deviation of ergodicity in terms of thermodynamical concepts, such as free energy and work fluctuation theorems.

\medskip
\textbf{Acknowledgement---}
We thank Xun Gao and Feng Pan for the helpful comments. 
B.C. acknowledges the support by the Sydney Quantum Academy and 
the support by the National Research Foundation, Singapore, and A*STAR under its CQT Bridging Grant and its Quantum Engineering Programme under grant NRF2021-QEP2-02-P05. F. M. acknowledges the support by City University of Hong Kong (Project No. 9610623) and the YTJX academy.
M.H.Y. is supported by National Natural Science Foundation of China (11875160 and U1801661), 
the Natural Science Foundation of Guangdong Province (2017B030308003), 
the Science, Technology and
Innovation Commission of Shenzhen Municipality (JCYJ20170412152620376 and JCYJ20170817105046702 and KYTDPT20181011104202253), 
the Key R\&D Program of Guangdong province (2018B030326001), the Economy, 
Trade and Information Commission of Shenzhen Municipality (201901161512), 
Guangdong Provincial Key Laboratory (Grant No. 2019B121203002).


\bibliography{ref}

\onecolumngrid

\appendix


\section{Output probabilities of Haar-distributed unitaries}
\label{sec:haar_output_prob}

Let $p = P_U(x_0)$ be the output probability of $x_0$, where $U$ is sampled from Haar measure.
We know that $p$ itself is a random variable, with the sample space being the group of $N \times N$ unitary matrices.
In this section, we show that $p$ obeys the Beta distribution, which tends to the Porter-Thomas distribution when $n \to \infty$.

\subsection{Beta distribution}
\label{subsec:Beta}

We will assume in the following that the quantum circuit $U$ is Haar random. To begin with, it is well established that a Haar random unitary can be decomposed into the components $U = C R$~\cite{Ozols2009}.
Here, $C$ is sampled from the Ginibre ensemble with $C_{jk} := a_{jk} + i b_{jk}$, where $a_{jk}$ and $b_{jk}$ are sampled from the standard normal distribution $N(0, 1)$ independently and $0 \leq j, k \leq N - 1$.
Generally, $C$ is not a unitary, so $R$ is used to perform the Gram-Schmidt process to $C$, making $CR$ a unitary. This implies that $R$ is an upper triangular matrix.
In particular, the first column of $R$ only has one non-zero entry, which is,
\begin{align}
    r_{00} = \frac{1}{\left(\sum_{j=0}^{N-1} |C_{j0}|^2 \right)^{1/2}} = \frac{1}{\left(\sum_{j=0}^{N-1} a_{j0}^2 + b_{j0}^2 \right)^{1/2}} \ .
\end{align}
Below, we will interchangeably use $\ket{x}$ or $\ket{j}$ to denote the computational basis state with $0 \leq j \leq N - 1$. 
Specifically, given an $n$-bit string $x = (x_{n-1}, \cdots, x_0)$, we transform it into $j = \sum_{k=0}^{n-1} x_k 2^{k}$.
Then, the output probability of the $j$-th output $P_U(j) = |\mel{j}{U}{0}|^2$ is the square of the norm of the entry in the first column of $U$, which can be expressed as,
\begin{align}
    P_U(j) = (a_{j0}^2 + b_{j0}^2) r_{00}^2 \ .
\end{align}

We would like to show that $P_U(j)$, as a random variable over all choices of $U$, obeys the Beta distribution.
First, suppose we have $k$ independent standard normal random variables $c_1, \cdots, c_k$, i.e., $c_i \sim N(0, 1)$.
It is well-known that the sum of squares of $c_i$, obeys the chi-squared distribution with $k$ degrees of freedom; formally, we write $c_1^2 + \cdots + c_k^2 \sim \chi^{2}(k)$~\cite{Bailey1992}.
That means, $P_U(j)$ is of the form $\frac{X}{X+Y}$, where $X = a_{j0}^2 + b_{j0}^2 \sim \chi^2(2)$ and $Y = \sum_{k \neq j} a_{k0}^2 + b_{k0}^2 \sim \chi^2(2N-2)$.  
With the relation of chi-squared distribution and Beta distribution~\cite{Bailey1992}, we have
\begin{align}
    P_U(j) \sim \mathrm{Beta}(1, N-1) \ ,
\end{align}
with the probability density function $\Pr(p) = (N-1) (1 - p)^{N-2}$ and $p = P_U(j)$.
Note that $\Pr(p)$ is the probability of the random variable $p = P_U(j)$.

We remark that $P_U(j)$ obeys the same distribution for all $j$, but they are not necessarily independent, since they all contains a common factor $r_{00}^2$, which is itself a random variable. 
But their correlation is small, as indicated by Theorem~1 in the main text.
Therefore, given a randomly chosen $U$, one obtains $2^n$ output probabilities $P_U(j)$.
If we count the frequency of $P_U(j)$ over a small range, then the value is approximately $\Pr(p) \dd{p}$, i.e., the probability associated with the Beta distribution.
Explicitly, we have
\begin{equation}\label{eq:PrPU}
    \frac{|\{ j | P_U(j) \in [p_0, p_0 + \dd{p}] \}|}{N}\simeq\Pr(p_0)\,\mathrm{d}p\ .
\end{equation}
The approximation will be better as the number of qubits $n$ increases.

\subsection{Porter-Thomas properties}
\label{subsec:PT_property}
Porter-Thomas distribution of probability $P_U(j)$ for having outcome $\ket{j}$ is given by
\begin{equation}\label{eq:Porter-Thomas}
    \Pr(p) = N\mathrm{e}^{-Np}\ .
\end{equation}
Consider the expression of Beta distribution under the limit of $N\to\infty$:
\begin{align}
    \lim_{N\to\infty}(N-1)(1-p)^{N-2} &= \lim_{N\to\infty}(N-1)\mathrm{e}^{(N-2)\ln(1-p)}\\
    &= \lim_{N\to\infty}N\mathrm{e}^{N\ln(1-p)}
\end{align}
Apply Taylor expansion to $\ln(1-p)$ at $p = 0$ to the 1st order, we have
\begin{equation}
    \ln(1-p) = -p + o(p)\ ,
\end{equation}
where $o(p)$ is a correction term depends merely on $p$. Considering the limiting case $N\to\infty$, $o(p)$ can be expressed as $O(1)$ since it is independent of $N$. Therefore, we have
\begin{equation}
    (N-1)(1-p)^{N-2} = N\mathrm{e}^{-N(p+O(1))} = N(\mathrm{e}^{-Np}+O(1)),\quad N\to\infty,
\end{equation}
which means that Beta distribution and Porter-Thomas distribution are approximately the same when $N\to\infty$.

\section{The expectation value and the variance for $f(p) = p^i$.}
\label{appendix:calculationforpi}
This section gives the calculation of the expectation value and variance for $f(p)=p^i$. By change of variable $u=Np$, we have
\begin{equation}
     \mathbb{E}_{U \in \Haar} [f(P_U(x_0))] = \int_0^1 f(p) \Pr(p) \dd{p} = \int_0^1 p^i N e^{-N p} \dd p = \frac{1}{N^i} \int_0^N u^i e^{-u} \dd u = \frac{1}{N^i}[\int_0^\infty u^i e^{-u} \dd u -\int_N^\infty u^i e^{-u} \dd u] \, . 
\end{equation}
The term $\int_0^\infty u^i e^{-u} \dd u$ gives us a gamma function, $\Gamma(i+1) = i!$. Now we evaluate the second term $\int_N^\infty u^i e^{-u} \dd u$. By change of variable $u' = u - N$, we have
\begin{equation}
    \int_N^\infty u^i e^{-u} \dd u = \int_0^\infty (u'+N)^i e^{-(u'+N)} \dd u' = \frac{N^i}{e^N} \int_0^\infty (1 + \frac{u'}{N})^i e^{-u'} \dd u \approx \frac{N^i}{e^N} \int_0^\infty (1 + i \frac{u'}{N}) e^{-u'} \dd u = \frac{N^i}{e^N} [e^{-1} + \frac{2i}{N}] \, .
\end{equation}
Therefore,
\begin{equation}
     \mathbb{E}_{U \in \Haar} [f(P_U(x_0))] =\frac{1}{N^i}[i! - \frac{N^i}{e^N} [e^{-1} + \frac{2i}{N}]]= \frac{i!}{N^i} + e^{-\Omega(N)} \, .
\end{equation}
Similarly, $\sigma^2_f$ can be calculated,
\begin{equation}
  \sigma_f^2 = \int_0^1 f(p)^2 N e^{-N p} \dd p  - \mathbb{E}_{U \in \Haar} [f]^2\, = \frac{(2i)!-(i!)^2}{N^{2i}}  + e^{-\Omega(N)} \, . 
\end{equation}
Therefore, the standard deviation $\sigma_f$ is
\begin{equation}
    \sigma_f = \frac{\sqrt{(2i)!-(i!)^2}}{N^{i}} + e^{-\Omega(N)} \, .
\end{equation}

\section{Deviation of ergodicity for completely noisy chips and for global depolarizing noise.}
\label{appendix:exampleofDeviationErgodicityforCompletely_Noisy_Chips}

\paragraph{Completely noisy chips.}
Now we analyze the deviation of ergodicity when the quantum chip is completely random, such that $Q_U(x) = \frac{1}{N}$ for any $x$ and any $U$.

For $f(p) = N^i p^i$ with $i>1$, we have 
\begin{equation}
        C_f(P_U, Q_U) = \frac{1}{N} \sum_{x \in \{0, 1\}^n} \frac{f(P_U(x))}{P_U(x)} Q_U(x)  = \frac{1}{N^2} \sum_{x \in \{0, 1\}^n} \frac{f(P_U(x))}{P_U(x)} =  N^{i-2} \sum_{x \in \{0, 1\}^n} {P^{i-1}_U(x)}  .
\end{equation}
In order to evaluate $N^{i-2} \sum_{x \in \{0, 1\}^n} {P^{i-1}_U(x)}$, observe that the ergodicity holds for $f(p) = p^{i-1}$ and we have
\begin{equation}
\frac{1}{N} \sum_{x\in\{0,1\}^n}P^{i-1}_U(x) =\mathbb{E}_{U \in \Haar} [P^{i-1}_U(x)] \pm O\left( \frac{\sigma_f}{\sqrt{N}} \right) = \frac{(i-1)!}{N^{i-1}} \pm O\left( \frac{\sqrt{(2i-2)!-[(i-1)!]^2}}{N^{i-1}\sqrt{N}} \right) .
\end{equation}
Thus, the correlation $C_f(P_U, Q_U)$ is
\begin{equation}\label{eq:C_fCompletelyNoisy}
    C_f(P_U, Q_U) =   (i-1)! \pm O\left( \frac{\sqrt{(2i-2)!-[(i-1)!]^2}}{\sqrt{N}} \right)\ . 
\end{equation}

Therefore, the completely noisy quantum chip violates the ergodicity condition for $f(p) = N^i p^i$,
\begin{equation}
 \mathrm{DE}_f= \abs{\EE_U [f(P_U)] -  C_f(P_U, Q_U)} = (i-1)!(i-1) \pm O\left(\frac{\sqrt{(2i-2)!-[(i-1)!]^2}}{\sqrt{N}} \right) >  O\left(\frac{\sigma_f}{\sqrt{N}}\right) = O\left(\frac{\sqrt{(2i)!-(i!)^2}}{\sqrt{N}}\right) \, .
\end{equation}

\paragraph{Global depolarizing noise.}
Suppose that the quantum chip undergoes a global depolarizing channel before the measurement, so that the output state $\rho_U$ is given by,
\begin{equation}
    \rho_U = F \ket{\psi_U}\bra{\psi_U} + (1-F) \frac{\mathbb{I}}{N} \, ,
\end{equation}
where $\ket{\psi_U} = U \ket{0}^{\otimes n}$ is the ideal output of the noiseless quantum circuit $U$ and $F$ is equal to the circuit fidelity $\mel{\psi_U}{\rho_U}{\psi_U}$ up to a additive factor at most $1/N$. 
Then, the experimental probability is given by
\begin{equation}
    Q_U(x) = F P_U(x) + (1-F) \frac{1}{N} \, .
\end{equation}
\cref{thm:ergodicity_polynomials} establishes that when $Q_U(x) = P_U(x)$ for all $x$, we have
\begin{equation}\label{eq:C_f_P_UP_U}
    C_f(P_U,P_U) =\mathbb{E}_{U \in \Haar} [f(P_U(x))] \pm O\left( \frac{\sigma_f}{\sqrt{N}} \right) = i! \pm O\left( \frac{\sqrt{(2i)!-[i!]^2}}{\sqrt{N}} \right) .
\end{equation}
Thus, the linearity of $    C_f(P_U, Q_U)$ in $Q_U$ together with \cref{eq:C_f_P_UP_U} and \cref{eq:C_fCompletelyNoisy} establish the following result, 
\begin{equation}
        C_f(P_U, Q_U) =F i! + (1-F) (i-1)! + O \left(\frac{1}{\sqrt{N}}\right) \, .
\end{equation}

\section{Analysis on the ergodicity condition}
\label{sec:analysis_random_unitary_condition}

We will prove the following proposition in this section.
\begin{proposition}
Let $q_1, q_2$ be two positive numbers, and $x\neq y$, then 
\begin{equation}
    \Cov \left( P^{q_1}_U(x), P^{q_2}_U(y) \right) < 0 \, ,
\end{equation}
where $P^{q_1}_U(x) = |\bra{x}U\ket{0}|^{2 q_1}$ and  $P^{q_2}_U(y)$ is defined similarly.
\end{proposition}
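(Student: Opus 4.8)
The plan is to compute the two moments appearing in the covariance explicitly via the joint distribution of the output probabilities, form the covariance to recover the closed form stated in \cref{lemma:covariance_over_U}, and then reduce the negativity to a single inequality between products of Gamma functions that follows from the strict log-convexity of $\Gamma$. The essential inputs are (i) the Beta law for a single $P_U(x)$ established in \cref{sec:haar_output_prob}, and (ii) the joint law of $P_U(x)$ and $P_U(y)$ for $x \neq y$.

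First I would record the single-point moment. Since the first column of a Haar-random $U$ is a uniformly random unit vector on the complex sphere, $P_U(x)\sim \mathrm{Beta}(1,N-1)$, so for any positive real $q$ the Beta integral gives
\begin{equation}
\EE_U\!\left[P_U^{q}(x)\right] = (N-1)\int_0^1 p^{q}(1-p)^{N-2}\,\dd p = \frac{\Gamma(q+1)\,\Gamma(N)}{\Gamma(q+N)} \, ,
\end{equation}
using $(N-1)\Gamma(N-1)=\Gamma(N)$. For the joint moment I would write the column entries as $u_j = g_j/\lVert g\rVert$ with i.i.d.\ standard complex Gaussians $g_j$, so that $\bigl(|u_0|^2,\dots,|u_{N-1}|^2\bigr)$ is distributed as $\mathrm{Dirichlet}(1,\dots,1)$ on the simplex. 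The Dirichlet mixed-moment formula then yields, for $x\neq y$,
\begin{equation}
\EE_U\!\left[P_U^{q_1}(x)\,P_U^{q_2}(y)\right] = \frac{\Gamma(N)}{\Gamma(N+q_1+q_2)}\,\Gamma(q_1+1)\,\Gamma(q_2+1) \, .
\end{equation}
Subtracting the product of the single-point moments reproduces precisely the bracketed expression of \cref{lemma:covariance_over_U},
\begin{equation}
\Cov\!\left(P_U^{q_1}(x),P_U^{q_2}(y)\right) = \Gamma(q_1+1)\Gamma(q_2+1)\left[\frac{\Gamma(N)}{\Gamma(N+q_1+q_2)} - \frac{\Gamma(N)^2}{\Gamma(q_1+N)\Gamma(q_2+N)}\right] \, .
\end{equation}

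Finally, since $\Gamma(q_1+1)\Gamma(q_2+1)>0$, negativity is equivalent to the inequality
\begin{equation}
\Gamma(N+q_1)\,\Gamma(N+q_2) < \Gamma(N)\,\Gamma(N+q_1+q_2) \, .
\end{equation}
I would prove this by setting $h(s)=\ln\Gamma(N+s)$, which is strictly convex because $\Gamma$ is strictly log-convex (its trigamma is positive). The claim becomes $h(q_1)+h(q_2) < h(0)+h(q_1+q_2)$, i.e.\ $\int_0^{q_1} h'(t)\,\dd t < \int_0^{q_1} h'(t+q_2)\,\dd t$, which holds strictly because $h'$ is strictly increasing and $q_2>0$. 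This gives $\Cov<0$, and the relaxation from Haar to a unitary $2t$-design for integer exponents follows because the covariance of a degree-$t$ polynomial is a moment of total degree $2t$ in $U,U^\dagger$, which a $2t$-design reproduces exactly.

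I expect the main obstacle to be the joint-moment computation rather than the single-point one: the Haar-integration tool usually invoked (Weingarten calculus) handles only integer moments, so for general positive real $q_1,q_2$ one must instead justify and use the Dirichlet structure of the squared column moduli. Once that is in place, the remaining work---the Beta integral and the log-convexity inequality---is routine.
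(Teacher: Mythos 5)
Your proposal is correct, and the moment computations coincide with the paper's: the paper obtains both $\EE_U[P_U^{q}(x)] = \Gamma(q+1)\Gamma(N)/\Gamma(q+N)$ and the mixed moment $\Gamma(N)\Gamma(q_1+1)\Gamma(q_2+1)/\Gamma(q_1+q_2+N)$ by citing a closed-form Haar integral from the literature, whereas you derive the same formulas from the Beta law of a single output probability and the $\mathrm{Dirichlet}(1,\dots,1)$ law of the full vector of squared column moduli; these are the same facts reached by different bookkeeping. The genuine difference is in the final negativity step. The paper rewrites the two Gamma ratios as falling products $(q_1+q_2+N-1)\cdots N$ versus $[(q_1+N-1)\cdots N][(q_2+N-1)\cdots N]$ and asserts the inequality by inspection; that notation is only literally meaningful when the exponents are integers, so for general positive reals the paper's argument is somewhat informal. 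Your reduction to $\Gamma(N+q_1)\Gamma(N+q_2) < \Gamma(N)\Gamma(N+q_1+q_2)$ via strict convexity of $h(s)=\ln\Gamma(N+s)$ (positivity of the trigamma) handles arbitrary positive reals $q_1,q_2$ cleanly and is the more rigorous route; your closing remark on relaxing Haar to a unitary $2t$-design for integer exponents also matches the paper's discussion. The one thing worth making explicit if you write this up is the justification that the squared moduli of the first column of a Haar unitary are jointly $\mathrm{Dirichlet}(1,\dots,1)$ (e.g.\ via the Gaussian normalization construction you sketch), since that is the input replacing the cited integral formula.
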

\begin{proof}
By definition, the covariance is 
\begin{equation}
     \Cov \left( P^{q_1}_U(x), P^{q_2}_U(y) \right) = \EE_U \left[ P^{q_1}_U(x) P^{q_2}_U(y) \right]- \EE_U \left[ P^{q_1}_U(x) \right] \EE_U \left[P^{q_2}_U(y) \right] \ .
\end{equation}
The expectation here takes the  following integral form in terms of Haar random unitaries,
\begin{equation}
    \EE_U \left[ P^{q_1}_U(x) P^{q_2}_U(y) \right] = \int  |\bra{x}U\ket{0}|^{2 q_1} |\bra{y}U\ket{0}|^{2 q_2} \dd U .
\end{equation}
Such integral can be easily evaluated using results shown in Ref.~\cite{puchala2011symbolic}. It is proved that 
\begin{equation}
    \int \prod_{i=1}^N |\bra{i}U\ket{0}|^{2 q_i} = \Gamma(N) \frac{\Gamma(q_1+1)\Gamma(q_2+1)...\Gamma(q_N+1)}{\Gamma(q_1 + q_2 +...+q_N +N)} \, ,
\end{equation}
where $q_i$'s are non-negative numbers. Applying this result, we have
\begin{equation}
    \EE_U \left[ P^{q_1}_U(x) P^{q_2}_U(y) \right] = \Gamma(N) \frac{\Gamma(q_1+1)\Gamma(q_2+1)}{\Gamma(q_1+q_2+N)} \, ,
\end{equation}
and 
\begin{equation}
    \EE_U \left[ P^{q_1}_U(x) \right] \EE_U \left[ P^{q_2}_U(y) \right] = \Gamma(N)^2 \frac{\Gamma(q_1+1)\Gamma(q_2+1)}{\Gamma(q_1+N)\Gamma(q_2+N)} \, .
\end{equation}
Then, it is easy to check that 
\begin{align}
    &\Cov \left( P^{q_1}_U(x), P^{q_2}_U(y) \right)\\
    &= \Gamma(q_1+1)\Gamma(q_2+1)\left[\frac{\Gamma(N)}{\Gamma(q_1+q_2+N)} - \frac{\Gamma(N)\Gamma(N)}{\Gamma(q_1+N)\Gamma(q_2+N)}\right] \\
    &= \Gamma(q_1+1)\Gamma(q_2+1)\left\{\frac{1}{(q_1+q_2+N-1)(q_1+q_2+N-2)...N} - \frac{1}{[(q_1+N-1)...N] [(q_2+N-1)...N]} \right\} \\
    &<0
\end{align}
\end{proof}

With this proposition, the following corollary is immediate.
\begin{corollary}
Let $f(p)=\sum_i a_i p^i$ be a polynomial with non-negative coefficients $a_i$, then
\begin{equation}
    \Cov \left( f(P_U(x)), f(P_U(y)) \right) < 0 \, ,
\end{equation}
for any $x\neq y$.
\end{corollary}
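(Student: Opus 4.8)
The plan is to reduce the corollary directly to the Proposition just established, using nothing more than the bilinearity of the covariance. Writing $f(p) = \sum_i a_i p^i$, I would first expand
\begin{equation}
\Cov\left(f(P_U(x)), f(P_U(y))\right) = \sum_{i,j} a_i a_j \, \Cov\left(P_U^i(x), P_U^j(y)\right) \, ,
\end{equation}
which follows immediately from the definition $\Cov(A,B) = \EE_U[AB] - \EE_U[A]\EE_U[B]$ together with the linearity of $\EE_U$ in each slot. This turns the single covariance of two polynomials into a weighted sum of the monomial covariances controlled by the Proposition.

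Next I would dispose of the degree-$0$ contributions. Any term with $i = 0$ or $j = 0$ involves $P_U^0 = 1$, a constant, whose covariance with anything vanishes; these terms therefore drop out of the sum. The surviving terms all have $i, j \geq 1$, so each is a valid instance of the Proposition with positive parameters $q_1 = i$ and $q_2 = j$. Applying the Proposition term by term yields $\Cov\left(P_U^i(x), P_U^j(y)\right) < 0$ for every surviving pair whenever $x \neq y$.

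Finally I would combine the signs. Since the coefficients satisfy $a_i \geq 0$, every product $a_i a_j$ is non-negative, so the expansion is a non-negative combination of strictly negative quantities and is therefore $\leq 0$; it is strictly negative provided at least one coefficient $a_i$ with $i \geq 1$ is positive. This last bookkeeping is the only point that deserves care: the non-negativity hypothesis is exactly what prevents sign cancellation when the individually negative monomial covariances are summed, and one must separately note that a pure constant (or the zero polynomial) gives covariance exactly $0$, so the strict inequality is understood for genuinely non-constant $f$. There is no substantive analytic obstacle here—the Proposition does all the work, and the corollary is a purely algebraic consequence of bilinearity plus sign-preservation under a non-negative linear combination.
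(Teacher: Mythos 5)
Your proposal is correct and is exactly the argument the paper leaves implicit when it declares the corollary ``immediate'' from the proposition: expand by bilinearity of the covariance, drop the constant terms, and apply the monomial result with $q_1=i$, $q_2=j$ to each surviving summand. Your added observation that strictness requires at least one positive coefficient $a_i$ with $i\geq 1$ (otherwise the covariance is exactly $0$) is a valid refinement of the statement that the paper glosses over.
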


In particular, we proved that the covariance is negative for $f(p)=p$ and $f(p)=p^2$.
The above implies that the random unitary condition holds for the polynomial function $f(p) = \sum_i a_i p^i$ where $a_i$'s are all positive, as we discussed in the main text.

\section{Proof of ergodicity for cross entropy}\label{appendix:ergodicity_proof_plogp}

Here, we are going to apply the replica trick to prove \cref{thm:ergodicity_log}.
Recall that the replica trick says $\ln p = \lim_{i \to 0} \frac{p^i - 1}{i}$ and our goal is to prove ergodicity of Haar random unitaries relative to the function $f(p) = p \ln{p}$.

Let $g_i(p) = \frac{p^{i+1} - p}{i}$ and we have $f(p) = \lim_{i \to 0} g_i(p)$.
We want to check the covariance
\begin{align}
    \Cov \left( g_i(P_U(x)), g_i(P_U(y)) \right) &= \frac{1}{i^2} \Cov \left( P_U(x)^{i+1} - P_U(x), P_U(y)^{i+1} - P_U(y) \right) \\
    &= \frac{1}{i^2} \left[ \Cov \left( P_U(x)^{i+1}, P_U(y)^{i+1} \right) - \Cov \left( P_U(x), P_U(y)^{i+1} \right) - \Cov \left( P_U(x)^{i+1}, P_U(y) \right) + \Cov \left( P_U(x), P_U(y) \right) \right] 
\end{align}
is less than or equal to zero or not.
Applying \cref{lemma:covariance_over_U}, we have 
\begin{equation}\label{eq:COVbeforelog}
    \Cov \left( g_i(P_U(x)), g_i(P_U(y)) \right) = \frac{\Gamma[N]}{i^2}\left[  \frac{\Gamma (i+2)\Gamma (i+2)}{\Gamma (2 i+N+2)}-\frac{2 \Gamma (i+2)}{\Gamma (i+N+2)}-\frac{\Gamma (N) \left[ \Gamma (i+N+1)-\Gamma (i+2) \Gamma (N+1) \right]^2}{\Gamma (N+1)^2 \Gamma (i+N+1)^2}+\frac{1}{\Gamma (N+2)}\right]
\end{equation}
Taking the limit $i \to 0$, we have
\begin{align}
    \Cov(f(P_U(x)), f(P_U(y))) = \lim_{i \to 0} \Cov \left( g_i(P_U(x)), g_i(P_U(y)) \right) \ ,
\end{align}
which means that 
\begin{align} \label{eq:COVlog}
    \Cov(f(P_U(x)), f(P_U(y))) = &\frac{6 [\psi ^{(0)}(N+1)]^2+12 (\gamma -1) \psi ^{(0)}(N+1)-6 \psi ^{(1)}(N+1)+6 \gamma ^2+\pi ^2-12 \gamma }{6 N^2} \nonumber\\
    &-\frac{\left(12 [\psi ^{(0)}(N+1)]^2+24 (\gamma -1) \psi ^{(0)}(N+1)-6 \psi ^{(1)}(N+1)+12 \gamma ^2+\pi ^2-24 \gamma +6\right)}{6 N^2} \nonumber\\
    & +\frac{2 (\gamma -1) \psi ^{(0)}(N+2)+\psi ^{(0)}(N+2)^2-\psi ^{(1)}(N+2)+(\gamma -1)^2}{(N+1)N} \,  ,
\end{align}
where $\psi^{(m)}(z):= \frac{\dd^{m+1}}{\dd z^{m+1}} \ln(\Gamma(z))$ is the polygamma function of order $m$ and $\gamma$ is the Euler constant. 
Rearranging \cref{eq:COVlog} gives
\begin{align}
    \Cov(f(P_U(x)), f(P_U(y))) = \frac{-\ln ^2(N)-2 \gamma  \ln (N)+4 \ln (N)-\gamma ^2+4 \gamma -4}{N^3}+O\left(\frac{1}{N^4}\right) \, ,
\end{align}
which is smaller than $0$ for sufficiently large $N$.
Finally, we use similar procedure as in the proof of \cref{thm:ergodicity_polynomials} to conclude the proof for the ergodicity relative to $f(p) = p \ln p$.

\section{Discussion on Levy's lemma}
\label{sec:levy_lemma_discussion}

Levy's Lemma is a probabilistic result in high-dimensional geometry which states that a Lipschitz continuous function on a high-dimensional sphere is close to its mean value with high probability~\cite{milman1986asymptotic,ledoux2001concentration}. Levy's lemma has been widely used in quantum information to demonstrate that most quantum states are nearly maximally entangled~\cite{hayden2006aspects} and to justify the typicality arguments in the study of large quantum systems~\cite{hu2019generalized,dahlsten2014entanglement}, providing a foundation for understanding phenomena such as the concentration of measure in the space of quantum states~\cite{muller2011concentration,mcclean2018barren}.

One may be tentative to think that the ergodicity condition is a direct consequence of Levy's lemma.
Here, we discuss the relation between Levy's lemma and the ergodicity condition and show that these two are not equivalent.
We first recall Levy's lemma.

\begin{lemma}[Levy's Lemma, see \cite{milman1986asymptotic}]
    Let $g$ be a function that maps vectors on the $k$-dimensional unit sphere $\mathbb{S}^k$ to real numbers. 
    Suppose $g$ has Lipschitz constant $\eta$, meaning that $|g(X_1)-g(X_2)| \leq \eta \norm{X_1 - X_2}$  with respect to the Euclidean norm for any $X_1, X_2 \in \mathbb{S}^k$. Then there exists a constant $C$ such that for any $X\in \mathbb{S}^k$ chosen uniformly at random, one has
    \begin{equation}\label{eq:levy_lemma}
        \Pr[|g(X)-\EE_{\mathbb{S}^{2N}} [g(X)]| > \epsilon] \leq 2 e^{-C (k+1) \epsilon^2 /\eta^2} \, .
    \end{equation}
\end{lemma}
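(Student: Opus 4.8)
The plan is to prove Levy's Lemma along the classical route pioneered by L\'evy, namely by combining the spherical isoperimetric inequality with the Lipschitz hypothesis: first obtain concentration about a \emph{median} of $g$, then transfer the centering to the \emph{mean}. Throughout, let $\mu$ denote the rotation-invariant probability measure on $\mathbb{S}^k \subset \RR^{k+1}$. The Lipschitz bound is stated in the Euclidean (chordal) metric, but since the geodesic distance dominates the chordal one, $g$ is also $\eta$-Lipschitz for the geodesic metric $d$, which is the metric in which the isoperimetric comparison is cleanest. I would also normalize by rescaling so that one may treat $g/\eta$ as $1$-Lipschitz, reinstating $\eta$ at the end.

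First I would fix a median $M$ of $g$, i.e. a number with $\mu(g \leq M) \geq \tfrac12$ and $\mu(g \geq M) \geq \tfrac12$, and set $A = \{X \in \mathbb{S}^k : g(X) \leq M\}$, so $\mu(A) \geq \tfrac12$. The core estimate is the L\'evy--Schmidt spherical isoperimetric inequality: among all measurable sets of a given measure, geodesic caps minimize the measure of the enlargement $A_t := \{X : d(X,A) \leq t\}$. Comparing $A$ with a hemisphere (the cap of measure $\tfrac12$) and computing the hemisphere enlargement explicitly yields $\mu(A_t) \geq 1 - c_1 e^{-c_2 k t^2}$ for absolute constants $c_1, c_2 > 0$. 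The Lipschitz property then performs the geometric-to-analytic translation: if $X \in A_t$ there is $Y \in A$ with $d(X,Y) \leq t$, whence $g(X) \leq g(Y) + \eta t \leq M + \eta t$, so $\{g > M + \eta t\} \subseteq \mathbb{S}^k \setminus A_t$ and $\mu(g > M + \eta t) \leq c_1 e^{-c_2 k t^2}$. Setting $t = \epsilon/\eta$ gives the upper tail; the identical argument applied to $B = \{g \geq M\}$ gives the lower tail, and together they yield $\mu(|g - M| > \epsilon) \leq 2 c_1 e^{-c_2 k \epsilon^2/\eta^2}$, which is \cref{eq:levy_lemma} but centered at the median.

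The final step is to replace $M$ by $\EE[g]$. Integrating the median-centered tail bound shows $|\EE[g] - M| = O(\eta/\sqrt{k})$, so changing the centering point perturbs the exponent only by an $O(1/\sqrt{k})$ amount that is absorbed into the constant $C$ (and the cosmetic replacement of $k$ by $k+1$ in the stated exponent is likewise harmless). I expect the isoperimetric inequality itself to be the main obstacle: its proof, via two-point spherical rearrangement or Gromov's comparison argument exploiting the lower Ricci curvature bound $\mathrm{Ric} \geq (k-1)$, is the substantive analytic input, as is the explicit asymptotics of the hemisphere enlargement. A technically lighter alternative bypasses isoperimetry entirely: establish the logarithmic Sobolev inequality on $\mathbb{S}^k$ with constant proportional to $k-1$ via the Bakry--\'Emery criterion, then run Herbst's argument on $\log \EE[e^{\lambda(g-\EE g)}]$ to obtain the sub-Gaussian bound $\EE[e^{\lambda(g-\EE g)}] \leq e^{\lambda^2 \eta^2/(2(k-1))}$ directly, and conclude by optimizing the Chernoff bound over $\lambda$; this produces \cref{eq:levy_lemma} centered at the mean with no median-to-mean transfer, at the cost of first proving the log-Sobolev inequality.
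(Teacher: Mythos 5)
The paper does not prove this lemma at all: it is imported verbatim from the literature (Milman--Schechtman) and used as a black box in Appendix F, so there is no in-paper argument to compare against. Your outline is the standard and correct route to the result, and it is in fact the same route as in the cited reference: reduce to the geodesic metric (legitimate, since geodesic distance dominates chordal distance, so the Lipschitz bound transfers), invoke the L\'evy--Schmidt isoperimetric inequality to get concentration of a $1$-Lipschitz function about its median with tail $c_1 e^{-c_2 k t^2}$, translate back through the Lipschitz constant by setting $t = \epsilon/\eta$, and then move the centering from the median to the mean by integrating the tail to get $|\EE[g]-M| = O(\eta/\sqrt{k})$ and absorbing the shift (and the prefactor $c_1$) into the constant $C$ --- the absorption works because in the regime $\epsilon = O(\eta/\sqrt{k})$ the stated bound can be made trivially $\geq 1$ by shrinking $C$. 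The one honest caveat, which you acknowledge, is that the substantive analytic input --- the spherical isoperimetric inequality, or in your alternative route the log-Sobolev inequality on $\mathbb{S}^k$ plus Herbst's argument --- is named rather than proved; as a proof of the lemma from first principles this is a sketch, but as a reconstruction of the standard argument it is sound and complete in structure. (Incidentally, the subscript $\mathbb{S}^{2N}$ in the displayed expectation of the statement should read $\mathbb{S}^{k}$; the $2N$ is an artifact of the specific application in the appendix.)
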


For any quantum state $U\ket{0^n}$, we define a random variable $X_U \in \mathbb{S}^{2N}$ by concatenating the real and imaginary parts of $U \ket{0^n}$.
When $U$ is Haar random, $X_U$ is uniformly distributed over $\mathbb{S}^{2N}$.
Define a function $g: \mathbb{S}^{2N} \to \mathbb{R}$ as 
\begin{align}
    g(X_U) := \frac{1}{N} \sum_{x} f(P_U(x)) \ .
\end{align}
Then, $\EE_{\mathbb{S}^{2N}} [g(X_U)] = \frac{1}{N} \sum_x \EE_U [f(P_U)] = \EE_U [f(P_U)]$.
From Levy's lemma, we have
\begin{align}
    \Pr(\left| \EE_U [f(P_U)] - \frac{1}{N} \sum_x f(P_U(x)) \right| \geq \alpha \frac{\sigma_f}{\sqrt{N}}) \leq A \ ,
\end{align}
where $A := 2 \exp(- \frac{C (2N+1) \alpha^2 \sigma_f^2}{N \eta^2})$ and $\alpha$ is a positive constant.
Below, we take $f(p) = p^k$, which gives $\sigma_f^2 = \Theta (1/N^{2k})$.

To examine the scaling of $A$, we need to estimate the Lipschitz constant $\eta$ of $g(X_U)$. 
When $U = I^{\otimes n}$, we have $g(X_{I^{\otimes n}}) = 1/N$ and when $U = H^{\otimes n}$, we have $g(X_{H^{\otimes n}}) = 1/N^k$.
The distance between $X_{I^{\otimes n}}$ and $X_{H^{\otimes n}}$ as two points in $\mathbb{S}^{2N}$ is upper bounded by $\sqrt{2}$. 
Therefore, we have
\begin{equation}
    \eta \geq \frac{\abs{g(X_{I^{\otimes n}}) -g(X_{H^{\otimes n}})}}{\norm{X_{I^{\otimes n}} - X_{H^{\otimes n}}}} \geq \frac{1/N - 1/N^k}{\sqrt{2}} \, .
\end{equation}
Putting together, we have $A = \Omega \left( \exp(-\frac{\alpha^2}{N^{2k-2}}) \right)$, which is away from zero.
In contrast, for $f(p) = p^k$, \cref{thm:ergodicity_polynomials} gives a bound of $1/\alpha^2$ for the probability of deviation.
Therefore, Levy's lemma does not appear to imply the ergodicity condition.

\end{document}